\definecolor{firebrick}{HTML}{871a1a}
\definecolor{metro_teal}{HTML}{23373b}
\definecolor{light_teal}{HTML}{7E9AA1}
\newcommand{\vd}{\mathbf{d}}
\newcommand{\vtheta}{\boldsymbol{\theta}}
\newcommand{\vz}{\mathbf{z}}
\newcommand{\vp}{\mathbf{p}}
\newcommand{\mA}{\mathbf{A}}
\newcommand{\cL}{\mathcal{L}}
\newcommand{\cT}{\mathcal{T}}
\newcommand{\cR}{\mathcal{R}}
\newcommand{\cP}{\mathcal{P}}
\newcommand{\cA}{\mathcal{A}}
\newcommand{\vv}{\mathbf{v}}
\newcommand{\vy}{\mathbf{y}}
\newcommand{\ve}{\mathbf{e}}
\newcommand{\vs}{\mathbf{s}}
\newcommand{\vbeta}{\boldsymbol{\beta}}
\newcommand{\vgamma}{\boldsymbol{\gamma}}
\newcommand{\vomega}{\boldsymbol{\omega}}
\newcommand{\vol}{\text{\textbf{vol}}}
\newcommand{\cut}{\text{\textbf{cut}}}
\newtheorem*{conj*}{Conjecture}
\newtheorem*{dfn*}{Definition}
\newtheorem*{asm*}{Assumption}
\newtheorem*{thm*}{Theorem}
\newtheorem*{lm*}{Lemma}
\newtheorem*{cor*}{Corollary}
\newtheorem*{clm*}{Claim}
\newtheorem*{fct*}{Fact}
\newtheorem{lemma*}{Lemma}
\newtheorem{prop}{Proposition}
\newtheorem{prop*}{Proposition}
\theoremstyle{remark}
\theoremstyle{remark}
\newcommand\abs[1]{\left|#1\right|}
\newcommand\R[0]{\mathbb{R}}
\newcommand{\argmax}{\operatornamewithlimits{argmax}}
\newcommand\prob[0]{\mathbb{P}}
\newcommand{\norm}[1]{\left\lVert#1\right\rVert}
\newcommand{\bracket}[1]{\langle #1 \rangle}
\definecolor{comment_purple}{HTML}{4700b3}
\begin{document}

\title{Generative Hypergraph Clustering: From Blockmodels to Modularity}
\author{Philip S. Chodrow, Nate Veldt, and Austin R. Benson}

\maketitle

\marginnote[450pt]{\fontsize{7}{9}\selectfont
\noindent Philip S. Chodrow \\ 
Department of Mathematics \\ 
University of California, Los Angeles\\ 
\texttt{phil@math.ucla.edu} \\ 
\noindent Nate Veldt \\ Center for Applied Mathematics \\ Cornell University\\
\texttt{lnv22@cornell.edu } \\ 
\noindent Austin R. Benson \\ 
Department of Computer Science \\ 
Cornell University\\
\texttt{arb@cs.cornell.edu} 
}

\begin{abstract}
	Hypergraphs are a natural modeling paradigm for a wide range of complex relational systems, with nodes representing system components and hyperedges representing multiway interactions. 
	A standard analysis task is to identify clusters of closely related or densely interconnected nodes. 
	In the special case of graphs, in which edges connect exactly two nodes at a time, there are a number of probabilistic generative models and associated inference algorithms for clustering. 
	Many of these are based on variants of the stochastic blockmodel, a random graph with flexible cluster structure. 
	However, there are few models and algorithms for hypergraph clustering. 
	Here, we propose a Poisson degree-corrected hypergraph stochastic blockmodel (DCHSBM), an expressive generative model of clustered hypergraphs with heterogeneous node degrees and edge sizes.  
	Approximate maximum-likelihood inference in the DCHSBM naturally leads to a clustering objective that generalizes the popular modularity objective for graphs. 
	We derive a general Louvain-type algorithm for this objective, as well as a a faster, specialized ``All-Or-Nothing'' (AON) variant in which edges are expected to lie fully within clusters.
	This special case encompasses a recent proposal for modularity in hypergraphs, while also incorporating flexible resolution and edge-size parameters. 
	We show that AON hypergraph Louvain is highly scalable, including as an example an experiment on a synthetic hypergraph of one million nodes. 
	We also demonstrate through synthetic experiments that the detectability regimes for hypergraph community detection differ from methods based on dyadic graph projections. 
	In particular, there are regimes in which hypergraph methods can recover planted partitions even though graph based methods necessarily fail. 
	We use our generative model to analyze different patterns of higher-order structure in school contact networks, U.S. congressional bill cosponsorship, U.S. congressional committees, product categories in co-purchasing behavior, and hotel locations from web browsing sessions, finding interpretable higher-order structure. 
	We then study the behavior of our AON hypergraph Louvain algorithm, finding that it is able to recover ground truth clusters in empirical data sets exhibiting the corresponding higher-order structure. 
	
\end{abstract}

\section{Introduction}

Graphs are a fundamental abstraction for complex relational systems throughout the sciences~\cite{jackson-networks-book,Easley2010,newman-networks-book}. 
A graph represents components of a system by a set of nodes, and interactions or relationships among these components using edges that connect pairs of nodes.  
Much of the structure in complex data, however, involves higher-order interactions and relationships between more than two entities at once~\cite{Milo-2002-motifs,Benson-2016-hoo,Benson-2018-simplicial,lambiotte2019networks,battiston2020networks,torres2020and}.
Hypergraphs are now a burgeoning paradigm for modeling these and many other systems~\cite{li2018submodular,de2020social,sahasrabuddhe2020modelling,veldt2020minimizing}.
A hypergraph still represents the components by a set of nodes, but the edges (often called \emph{hyperedges}) may connect arbitrary numbers of nodes. 
A graph is a special case of a hypergraph, in which each edge connects exactly two nodes.

Graph clustering is a fundamental task in network science that seeks to describe large graphs by dividing their nodes into closely related or interconnected groups (also called clusters or communities)~\cite{porter2009communities,Benson-2016-hoo,fortunato2016community,li2018submodular}. 
Clustering methods for hypergraphs have applications in parallel computation~\cite{Ballard:2016:HPS:3012407.3015144,kabiljo2017social}, circuit design~\cite{karypis1999multilevel}, image segmentation~\cite{Agarwal2005beyond}, semisupervised learning~\cite{Zhou2006learning,yadati2019hypergcn}, 
and higher-order network analysis of gene expression~\cite{tian2009gene}, food webs~\cite{li2017inhomogeneous}, and online social communities~\cite{neubauer2009towards,tsourakakis2017scalable}.

A well-established graph clustering approach is to model the graph as a sample from a probabilistic generative model , in which case the clustering task can be recast as a statistical inference problem~\cite{nowicki2001estimation,hoff2002latent,airoldi2008mixed,karrer2011stochastic,yang2013overlapping,peixoto2014hierarchical,athreya2017statistical}. 
While generative modeling is a mainstay in graph clustering, generative techniques for hypergraphs are largely lacking. 
Indeed, while a small number of generative models of clustered hypergraphs have been proposed~\cite{ghoshdastidar2014consistency,kim2018stochastic,Angelini2016, ke2019community}, these models typically generate hypergraphs with edges of only one size. 
With a recent exception \cite{ke2019community}, these models also do not model degree heterogeneity between nodes.
Heterogeneity in edge size and node degree are both key features of empirical data~\cite{Benson-2018-simplicial}, and their omission limits the applicability of many of these models for practical data analysis. 
An alternative to generative hypergraph modeling is to transform the hypergraph into a dyadic graph via clique expansion, where a dyadic edge connects any pair of nodes that appear together in some hyperedge~\cite{Zhou2006learning,Benson-2016-hoo}. 
While this enables the use of a wide array of existing models and algorithms for graphs, the higher-order structure is lost \cite{chodrow2019configuration}, and generative models of the resulting dyadic graph may rely on explicitly violated independence assumptions. 
Recently, nongenerative approaches based on the popular modularity clustering objective for graphs~\cite{PhysRevE.69.026113} have been proposed for hypergraphs~\cite{kumar2018,Kami2018,veldt2020parameterized}, although their lack of connection to a generative model limits their interpretability. 

Another approach to generative clustering is to use the representation of a hypergraph as a bipartite graph, and apply a generative model (e.g. \cite{larremore2014efficiently,gerlach2018network, yen2020community}) to the latter representation. 
This approach, while appropriate in many data sets, involves a strong assumption: the memberships of any two nodes in a given hyperedge are independent, conditional on the model parameters. 
This assumption is natural for certain classes of data. 
For example, consider an event co-attendance network, with  nodes representing music enthusiasts and hyperedges representing concerts.
Node membership in a hyperedge corresponds to attendance at the specified event.  
To a reasonable approximation, the decision of two fans to attend a given concert may indeed be independent, conditioned on the popularity of the performers, the location of the venue, and so on. 
In other data sets, however, the conditional independence assumption is explicitly violated. 
Multiway social interaction networks give one important class of examples. 
Interactions such as gossip, for instance, normally take place only between trusted individuals. 
The presence of a single uninvited outsider may entirely prevent the interaction from taking place. 
The ``all-or-nothing'' structure of such interactions is an important violation of the conditional independence assumptions made by most bipartite generative models. 
These examples highlight that the task of matching assumptions to higher-order data is an ongoing challenge, for which we benefit from a diversity of distinct tools. 

Here, we propose a generative approach to hypergraph clustering based on a degree-corrected hypergraph stochastic blockmodel (DCHSBM). 
This model generates clustered hypergraphs with heterogeneous degree distributions and hyperedge sizes.
We outline an approximate coordinate-ascent maximum-likelihood estimation scheme for fitting this model to hypergraph data, and show that one stage of this scheme generalizes the well-studied modularity objective for graphs. 
We derive accompanying Louvain algorithms for this class of modularity-like objectives, which are highly scalable in an important special case. 
We show computationally that hypergraph clustering methods are able to detect planted clusters in regimes in which graph-based methods necessarily fail due to known theoretical limits. 
We also show that, in data sets with appropriately matched higher-order structure, our generative hypergraph techniques are able to recover clusters correlated to metadata at higher rates than graph-based techniques. 
Our results highlight the importance of matching generative models to data sets, and point toward a number of directions for further work in higher-order network science.

\section{The Degree-Corrected Hypergraph Stochastic Blockmodel} \label{sec:DCHSBM}

The degree-corrected stochastic blockmodel is a generative model of graphs with both community structure and heterogeneous degree sequences \cite{karrer2011stochastic}. 
We now extend this model to the case of hypergraphs. 

For our model, let $n$ be the number of nodes in a hypergraph.
Each node $i$ is assigned to one of $\bar{\ell}$ groups.
We let $z_i \in [\bar{\ell}] = \{1, 2, \hdots, \bar{\ell}\},$\footnote{Here and elsewhere, we use the notation $[r] = \{1, \ldots, r\}$.} denote the group assignment of node $i$, and collect these assignments in a vector $\vz\in [\bar{\ell}]^n$. 
As in the dyadic degree-corrected SBM, each node $i$ is assigned a parameter $\theta_i$ governing its degree, and
we collect these parameters in a vector $\vtheta \in \R^n$. 
Let $\mathcal{R}$ represent the set of unordered node tuples, so that each $R \in \mathcal{R}$ is a set of nodes representing the location of a possible hyperedge. 
(Following the standard choice for the degree-corrected SBM in graphs, we allow $\mathcal{R}$ to include node tuples with repeated nodes.)
Let $\vz_R$ denote the vector of cluster labels for nodes in a given tuple $R$, and $\vtheta_R$ the vector of degree parameters. 

We use an \emph{affinity function} $\Omega$ to control the probability of placing a hyperedge at a given node tuple $R$,
which depends on the group memberships of the nodes in $R$.
Formally, $\Omega$ maps the group assignments $\vz_R$ to a nonnegative number.
If $\Omega(\vz_R)$ is large, there is a higher probability that a hyperedge forms between the nodes in $R$.
In our model, the number of hyperedges placed at $R \in \mathcal{R}$ is distributed as $a_R \sim \text{Poisson}\left(b_R\pi(\vtheta_R)\Omega(\vz_R)\right)$, where $b_R$ denotes the number of distinct ways to order the nodes of $R$ and $\pi(\vtheta_R) = \prod_{i \in R} \theta_i$ is the product of degree parameters.
The probability of realizing a given value $a_R$ is then
\begin{align}
	\prob(a_R|\vz, \Omega, \vtheta) = \frac{e^{-b_R\pi(\vtheta_R)\Omega(\vz_R)}(b_R\pi(\vtheta_R)\Omega(\vz_R))^{a_R}}{a_R!}.
\end{align} 
This edge generation process has the following intuitive interpretation: for each of the $b_R$ possible orderings of nodes in $R$, we attempt to place a $\text{Poisson}\left(\pi(\vtheta_R)\Omega(\vz_R)\right)$ number of hyperedges on this tuple. 
The result is a weighted hyperedge on the unordered tuple $R$, whose weight can be any nonnegative integer. 
This is a helpful modeling feature, as many empirical hypergraphs contain multiple hyperedges between the same set of nodes.\footnote{Even in hypergraph data sets where we only know the presence or absence of hyperedges (but no weights),
	the Poisson-based model serves as a computationally convenient approximation to a Bernoulli-based model.}
The probability of realizing a given hyperedge set $\mA = (a_R)_{R \in \mathcal{R}}$ is then just the product of probabilities over each $R \in \mathcal{R}$.

\subsection{Estimation of Degree and Affinity Parameters} \label{sec:parameter-estimation}

There are many methods for inference in stochastic blockmodels and their relatives, including variational coordinate ascent \cite{airoldi2008mixed}, variational belief propagation \cite{decelle2011asymptotic,zhang2014scalable}, and Markov Chain Monte Carlo \cite{nowicki2001estimation, peixoto2020merge}. 
We perform approximate maximum-likelihood inference via coordinate ascent.
We do so in order to exploit a recent connection between maximum-likelihood inference in stochastic blockmodels and the popular modularity objective for graph clustering \cite{newman2016equivalence}.
Our coordinate ascent framework, in which we alternate between estimating parameters and node labels, is a close relative of expectation-maximization (EM) algorithms for blockmodel inference \cite{decelle2011asymptotic}. 
Standard versions of EM construct ``soft'' clusters, in which each node is given a weighted assignment in every possible cluster. 
``The'' cluster for a given node is often taken to be the cluster in which the node has largest weight. 
In contrast, our approach generates ``hard'' clusters in which each node belongs to exactly one cluster. 
Profile likelihood methods offer an alternative framework for maximum-likelihood inference in SBMs \cite{bickel2009nonparametric}, and their development for hypergraphs is another promising avenue of future work.   

In the maximum-likelihood framework, we learn estimates $\hat{\vz}$ of the node labels,  $\hat{\Omega}$ of the affinity function, and $\hat{\vtheta}$ of the degree parameters by solving the optimization problem 
\begin{align}
\hat{\vz}, \; \hat{\Omega}, \; \hat{\vtheta} \equiv  \argmax_{\vz, \Omega, \vtheta}\;\;  \prob(\mA|\vz, \Omega, \vtheta)\;, \label{eq:ml}
\end{align}
where $\mA$ is a given data set represented by a collection of (integer-weighted) hyperedges.
As usual, it is easier to work with the log-likelihood, which has the same local optima.
The log-likelihood is
\begin{align}
\cL(\vz, \Omega, \vtheta) = \sum_{R\in \cR} \log \prob(a_R|\vz, \Omega, \vtheta) 
= Q(\vz, \Omega, \vtheta) + K(\vtheta) + C\;, \label{eq:hypergraph_ll}
\end{align}
where 
\begin{align}
Q(\vz, \Omega, \vtheta) &\equiv \sum_{R\in \cR}\left[a_R\log \Omega(\vz_R) -b_R\pi(\vtheta_R)\Omega(\vz_R)  \right] \\ 
K(\vtheta) &\equiv \sum_{R\in \cR}a_R \log \pi(\vtheta_R)\\ 
C &\equiv \sum_{R\in\cR}\left[a_R \log b_R - \log a_R!\right]\;.
\end{align}
The first term $Q(\vz, \Omega, \vtheta)$ is the only part of the log-likelihood that depends on the group assignments $\vz$ and affinity function $\Omega$. 
The second term depends on $\vtheta$, while the third term depends only on the data $\mA$ and can be disregarded for inferential purposes. 


In the coordinate ascent approach to maximum-likelihood, we alternate between two stages. 
In the first stage, we assume a current estimate $\hat{\vz}$ and obtain new estimates of $\Omega$ and $\vtheta$ by solving 
\begin{align}
	\hat{\Omega}, \hat{\vtheta} = \argmax_{\Omega, \vtheta} \cL(\hat{\vz}, \Omega, \vtheta)\;. \label{eq:coordinate-ascent-1}
\end{align}
The resulting $\hat{\Omega}, \hat{\vtheta}$ can be viewed as maximum-likelihood estimates, \emph{conditioned} on the current estimate $\hat{\vz}$ of the label vector $\vz$. 
In the second stage, we assume current estimates $\hat{\Omega}$ and $\hat{\vtheta}$ and obtain a new estimate of $\vz$ by solving 
\begin{align}
	\hat{\vz} = \argmax_{\vz} \cL(\vz, \hat{\Omega}, \hat{\vtheta})\;. \label{eq:coordinate-ascent-2}
\end{align}
We alternate between these two stages until convergence.

There are several identifiability issues that must be addressed. 
First, permuting the group labels in $\vz$ and $\Omega$ does not alter the value of the likelihood. 
We therefore impose an arbitrary order on group labels.
Second, the number of possible groups $\bar{\ell}$ can in principle be larger than the number of groups present in $\vz$. 
Such a case would correspond to the presence of groups which are statistically possible but empty in the given data realization. 
While other treatments are possible, we choose to disregard empty groups and treat $\bar{\ell}$ as equal to the number of distinct labels in an estimate of $\vz$. 
A final form of unidentifiability relates to the scales of $\vtheta$ and $\Omega$. 
For a fixed $\vtheta$ and $\Omega$, we can construct $\vtheta' \neq \vtheta$ and $\Omega' \neq \Omega$ such that $\cL(\vz, \Omega, \vtheta) = \cL(\vz, \Omega', \vtheta')$ (\Cref{sec:unidentifiable}). 
To enforce identifiability, we must therefore place a joint normalization condition on either $\vtheta$ or $\Omega$. 
We choose to constrain $\vtheta$ such that 
\begin{align}
\sum_{i = 1}^n\theta_i\delta(z_i,\ell) = \vol(\ell),\;\; \ell = 1, \ldots, \bar{\ell}\;, \label{eq:normalization}
\end{align}
where $\vol(\ell) = \sum_{i = 1}^n d_i\delta(z_i, \ell)$
and $d_i$ is the (weighted) number of hyperedges in which node $i$ appears.\footnote{Here, and throughout the rest of the text, $\delta$ is an indicator function evaluating to 1 if all its inputs are equal, and is 0 otherwise.} 

The usefulness of~\eqref{eq:normalization} is that, when $\vz$ is known or estimated, the conditional maximum-likelihood estimates $\hat{\vtheta}$ and $\hat{\Omega}$ in \eqref{eq:coordinate-ascent-1} take simple, closed forms.
First, 
for a fixed label vector $\vz$, when using the normalization~\eqref{eq:normalization}, the maximum-likelihood estimate for $\vtheta$ is 
\begin{align}
\hat{\vtheta} = \vd\;. \label{eq:hat-d}	
\end{align}
(See \Cref{sec:proof_constant}.)
Second, conditioned on $\vz$, if $\Omega$ takes constant value $\omega$ on some set $Y$ of unordered tuples of labels,\footnote{In full generality, we can have one such $\omega$ for every possible label arrangement for each hyperedge size
		in the data. Later, we will make natural restrictions on $\ fOmega$.} the maximum likelihood estimate for $\omega$ is 
	\begin{align}
	\hat{\omega} = \frac{\sum_{\vy\in Y}\sum_{R\in \cR}a_R \delta(\vz_R,\vy)}{\sum_{\vy\in Y} \prod_{y \in \vy} \vol(y)}\;. \label{eq:hat_omega_generalized}
	\end{align}
(See \Cref{sec:omega_estimate}.)
Although \eqref{eq:hat-d} assumes that $\vz$ was fixed, it is not necessary to \emph{know} $\vz$ in order to form the estimate $\hat{\vtheta}$. 
However, forming the estimate $\hat{\omega}$ via \eqref{eq:hat_omega_generalized} requires that we know or estimate $\vz$. 
It is therefore important to remember that $\hat{\omega}$ is not a globally optimal estimate, but rather a locally optimal estimate conditioned on the currently estimated group labels. 

The formula \eqref{eq:hat_omega_generalized} could also be inserted directly into the full likelihood maximization problem \eqref{eq:ml}, eliminating the parameters corresponding to $\Omega$ and producing a lower-dimensional \emph{profile likelihood}, which could then in principle be optimized directly. 
This approach has been successful for dyadic blockmodels \cite{bickel2009nonparametric}, and the development of similar methods for hypergraph blockmodels would be of considerable interest.  
The advantage of our coordinate ascent framework is that we are able to develop fast heuristics for solving~\eqref{eq:coordinate-ascent-2}, by generalizing widely-used algorithms for graph clustering (Section~\ref{sec:hmll}). 
Solving problem~\eqref{eq:coordinate-ascent-1} in our framework can also be interpreted as evaluating the profile likelihood for a fixed cluster vector $\vz$, highlighting the relationship between these approaches.

We now turn to the problem of inferring the label vector $\vz$. 
This problem leads naturally to a class of modularity-type objectives for hypergraph clustering.

\section{Hypergraph Modularities}\label{sec:hyper_mod}

Our results from the previous section imply that the estimated degree parameter $\hat{\vtheta}$ and piecewise constant affinity function $\hat{\Omega}$ can be efficiently estimated in closed form, provided an estimate of $\vz$. 
This provides a solution to the first stage \eqref{eq:coordinate-ascent-1} of coordinate ascent. 
We now discuss the second stage \eqref{eq:coordinate-ascent-2}. 
From \eqref{eq:hypergraph_ll}, it suffices to optimize $Q$ with respect to $\vz$.  
To do so, it is helpful to impose some additional structure on $\hat{\Omega}$. 

\subsection{Symmetric Modularities} \label{sec:symmetric_omega}
We obtain an important class of objective functions by stipulating that $\Omega$ is symmetric with respect to permutations of node labels. 
In this case, $\Omega(\vz_R)$ depends not on the specific labels $\vz_R$ in a given node tuple $R$, but only on the number of repetitions of each. 
Statistically, the corresponding DCHSBM generates hypergraphs in which all groups are statistically identical, conditioned on the degrees of their constituent nodes. 
Symmetric affinity functions thus give a flexible generalization of the planted partition stochastic blockmodel~\cite{jerrum1998metropolis,condon2001algorithms} to the setting of hypergraphs. 

Define the function $\phi(\vz) = \vp$, where $p_j$ is the number of entries of $\vz$ in the $j$th largest group in $\vz$, with ties broken arbitrarily. 
For example, if $\vz = (1, 1, 4, 1, 2, 3, 2)$, then $\vp = (3, 2, 1, 1)$. 
We call $\vp$ a \emph{partition vector}. 
The symmetry assumption implies that $\Omega$ is a function of $\vz_R$ only through $\vp = \phi(\vz_R)$. 
Accordingly, we abuse notation by writing $\Omega(\vp) \equiv \Omega(\vz)$ when $\vp = \phi(\vz)$. 

We now define generalized cuts and volumes corresponding to a possible partition vector $\vp$ for tuples of $k$ nodes:
\begin{align}
\cut_\vp(\vz) &\equiv \sum_{R \in \cR^k}a_R\delta(\vp, \phi(\vz_R)),  \\ 
\vol_\vp(\vz) &\equiv \sum_{\vy \in [\overline{\ell}]^{k}}\delta(\vp, \phi(\vy))\prod_{y\in \vy}\vol(y),
\end{align}
where $\cR^k$ is the subset of tuples in $\cR$ consisting of $k$ nodes.
The function $\cut_\vp(\vz)$ counts the number of edges that are split by $\vz$ into the specified partition $\vp$, while the function $\vol_\vp(\vz)$ is a sum-product of volumes over all  grouping vectors $\vy$ that induce partition $\vp$. 
Let $\mathcal{P}$ be the set of partition vectors on sets up to size $\bar{k}$, the maximum size of hyperedges. 
We show in \cref{sec:mod_cut_vol_deriv} that the symmetric modularity objective can then be written as 
\begin{align}
Q(\vz, \Omega, \vd) &= \sum_{\vp \in \cP} \left[\cut_\vp(\vz)\log\Omega(\vp) - \vol_\vp(\vz)\Omega(\vp)\right] \;. 
\label{eq:combinatorial}
\end{align}
For a partition vector $\vp$ for tuples of $k$ nodes, direct calculation of $\vol_\vp(\vz)$ is a summation of $\overline{\ell}^k$ elements, which can be impractical when either $\overline{\ell}$ or $k$ are large. 
We show in \Cref{sec:eval_sums}, however, that it is possible to evaluate these sums efficiently via a combinatorial identity. We also give a formula for updating volume terms $\mathrm{\vol}_\vp(\vz)$ when a candidate labeling is modified.

The objective function \eqref{eq:combinatorial} is related to the multiway hypergraph cut problem studied by \citet{veldt2020hypergraph}. 
They formulate the hypergraph cut objective in terms of \emph{splitting functions}, which associate penalties when edges are split between two or more clusters. 
One then aims to minimize the sum of penalties subject to constraints that certain nodes must not lie in the same cluster. 
Symmetric affinity functions in our framework correspond to \emph{signature-based splitting functions} in their terminology. 
Table~\ref{tab:symmetric_affinities} lists four of many families of affinity functions. 

\begin{table}   
	\centering
	\begin{tabular}{l l l }
		\toprule
		All-Or-Nothing (AON)            & $\Omega(\vp) = 
		\begin{cases}
		\omega_{k1} &\quad \norm{\vp}_{0} = 1 \\ 
		\omega_{k0} &\quad \text{otherwise.}
		\end{cases}$ & \\
		Group Number (GN)      & $\Omega(\vp) = f\left(\norm{\vp}_0, k\right)$ & \\
		Relative Plurality (RP) & $\Omega(\vp)= g\left(p_1 - p_2, k\right)$  & \\
		Pairwise  (P)& $\Omega(\vp)= h\left(\sum_{i \neq j}p_{i}p_{j}, k\right)$  & \\
		\bottomrule
	\end{tabular}
	\bigskip
	\caption{
		Symmetric affinity functions. 
		Throughout, $k = \norm{\vp}_0$ is the number of nodes in partition $\vp$, $\omega_{k0}$ and
		$\omega_{k1}$ are scalars, and $f$, $g$, and $h$ are arbitrary scalar functions.  
	} 
	\label{tab:symmetric_affinities}          
\end{table}

The All-Or-Nothing affinity function distinguishes only whether or not a given edge is contained entirely within a single cluster. 
This affinity function is especially important for scalable computation, and we discuss it further below. 
The Group Number affinity depends only on the number of distinct groups represented in an edge, regardless of the number of incident nodes in each one. 
Special cases of the Group Number affinity arise frequently in applications. 
When $f\left(\norm{\vp}_0, k\right) = \text{exp}(\norm{\vp}_0 -1)$, the first term of the modularity objective corresponds to a hyperedge cut penalty that is known in the scientific computing literature as ``$\text{connectivity}-1$''~\cite{deveci2015hypergraph}, the $K-1$ metric~\cite{karypis2000multilevel}, or the boundary cut~\cite{hendrickson2000graph}. It has also been called \emph{fanout} in the database literature~\cite{kabiljo2017social}. The related Sum of External Degrees penalty~\cite{karypis2000multilevel} is also a special case of the Group Number affinity. 
The Relative Plurality affinity considers only the relative difference between the size of the largest group represented in an edge and the next largest group. 
This rather specialized affinity function is especially appropriate in contexts where groups are expected to be roughly balanced, as we find, for example, in party affiliations in Congressional committees. 
Finally, the Pairwise affinity counts the number of pairs of nodes within the edge whose clusters differ. 
While this affinity function uses similar information to that used in dyadic graph models, there is no immediately apparent equivalence between any dyadic random graph and a DCHSBM with the Pairwise affinity function. 
There are many more symmetric affinity functions; see Table 3 of~\citet{veldt2020hypergraph} for several other splitting functions which can be used to define affinities. 

An important subtlety was recently raised by \citet{zhang2020statistical} concerning the relationship between blockmodels and modularity in \citet{newman2016equivalence}, which also applies to our derivation of \eqref{eq:combinatorial} above and \eqref{eq:aon-simplified} below. 
We derived the conditional maximum-likelihood estimates \eqref{eq:hat-d} and \eqref{eq:hat_omega_generalized} of $\vtheta$  and $\Omega$  under the assumption of a general, unconstrained affinity function $\Omega$. 
It is not guaranteed that these same estimates maximize the likelihood when additional constraints---such as the symmetry constraint $\Omega(\vz) = \Omega(\phi(\vz))$---are imposed. 
Indeed, as \citet{zhang2020statistical} show for the case of dyadic graphs, \eqref{eq:hat-d} and \eqref{eq:hat_omega_generalized} for estimating $\hat{\vtheta}$ and $\hat{\Omega}$ are only exact under the symmetry assumption on $\Omega$ when $\vol(\ell)$ is constant for each $\ell \in [\hat{\ell}]$. 
When the sizes of groups vary, as is typical in most data sets, \eqref{eq:hat-d} and \eqref{eq:hat_omega_generalized} are instead \emph{approximations} of the exact conditional maximum-likelihood estimates. 
The situation is reminiscent of the tendency of the graph modularity objective to generate clusters of approximately equal sizes \cite{gleich2016mining}. 
The objectives and algorithms we develop below should therefore be understood as approximations to coordinate-ascent maximum likelihood inference, which are exact only in the case that all clusters have equal volumes.   
See \citet{zhang2020statistical} for a detailed discussion of these issues in the context of dyadic graphs.

\subsection{All-Or-Nothing Modularity}
All-Or-Nothing affinity function is of special interest for modeling and computation.  
This affinity function is a natural choice for systems in which the occurrence of an interaction or relationship depends strongly on group homogeneity. 

Inserting the All-Or-Nothing affinity function from Table~\ref{tab:symmetric_affinities} into \eqref{eq:combinatorial} yields, after some algebra (\Cref{sec:mod_cut_vol_deriv_aon}), the objective 
\begin{align}
Q(\vz, \Omega, \vd) = - \sum_{k = 1}^{\overline{k}} \beta_k\left[\cut_{k}(\vz) + \gamma_k \sum_{\ell = 1}^{\overline{\ell}}\vol(\ell)^k \right] + J(\vomega)\;, \label{eq:aon-simplified}
\end{align}
where $\beta_k = \log \omega_{k1} - \log \omega_{k0}$, $\gamma_k = \beta_k^{-1}(\omega_{k1} - \omega_{k0})$,  and $J(\vomega)$ collects terms that do not depend on the partition $\vz$. 
We collect $\{\beta_k\}$ and $\{\gamma_k\}$ into vectors $\vbeta, \vgamma \in \R^{\bar{k}}$. 
We have also defined 
\begin{align}
	\cut_k(\vz) \equiv m_k - \sum_{R \in \cR^k}a_R\delta(\vz_R)\;.
\end{align}
In this expression, $m_k$ is the (weighted) number of hyperedges of size $k$, i.e., $m_k = \sum_{R \in \cR^k}a_R$. 
The cut terms $\cut_k(\vz)$ thus count the number of hyperedges of size $k$ which contain nodes in two or more distinct clusters. 
This calculation is a direct generalization of that from~\citet{newman2016equivalence} for graph modularity. 
Indeed, we recover the standard dyadic modularity objective by restricting to $k = 2$. 
We call \eqref{eq:aon-simplified} the \emph{All-Or-Nothing (AON) hypergraph modularity}. 

Recently, \citet{Kami2018} proposed a ``strict modularity'' objective for hypergraphs. 
This strict modularity is a special case of \eqref{eq:aon-simplified}, obtained by choosing $\omega_{k0}$ and $\omega_{k1}$ such that $\beta_k = 1$ and $\gamma_k = \frac{m_k}{(\vol(H))^k}$, where $\vol(H) = \sum_{i=1}^n d_i$ is the sum of all node degrees in hypergraph $H$. 
However, leaving these parameters free lends important flexibility to our proposed AON objective \eqref{eq:aon-simplified}. 
Tuning $\vbeta$ allows one to specify which hyperedge sizes are considered to be most relevant for clustering. 
In email communications, for example, a very large list of recipients may carry minimal information about their social relationships, 
and it may be desirable to down-weight large hyperedges by tuning $\vbeta$. 
Tuning $\vgamma$ has the effect of modifying the sizes of clusters favored by the objective, in a direct generalization of the resolution parameter in dyadic modularity~\cite{reichardt2006statistical,veldt2018cc}. 
Importantly, it is not necessary to specify the values of these parameters \emph{a priori}; instead, they can be adaptively estimated via \eqref{eq:hat_omega_generalized}. 

\section{Hypergraph Maximum-Likelihood Louvain 
}
\label{sec:hmll}

In order to optimize the modularity objectives \eqref{eq:combinatorial} and \eqref{eq:aon-simplified}, we propose a family of agglomerative clustering algorithms. 
These algorithms greedily improve the specified objective through local updates to the node label vector $\vz$.
The structure of these algorithms is based on the widely used and highly performant Louvain heuristic for graphs~\cite{blondel2008fast}. 
The standard heuristic alternates between two phases. 
In the first phase, each node begins in its own singleton cluster. 
Then, each node $i$ is visited and moved to the cluster of the adjacent node $j$ which maximizes the increase in the objective $Q$. 
If no such move increases the objective, then $i$'s label is not changed. 
This process repeats until no such moves exist which increase the objective. 
In the second phase, a ``supernode'' is formed for each label. 
The supernode represents the set of all nodes sharing that label.   
Then, the first phase is repeated, generating an updated labeling of supernodes, which are then aggregated in the second phase. 
The process repeats until no more improvement is possible. 
Since every step in the first phase improves the objective, the algorithm terminates with a locally optimal cluster vector $\vz$. 

This heuristic generalizes naturally to the setting of hypergraphs. 
However, the incorporation of heterogeneous hyperedge sizes and general affinity functions considerably complicates implementation. 
Here we provide a highly general Hypergraph Maximum-Likelihood Louvain (HMLL) algorithm for optimizing the symmetric modularity objective~\eqref{eq:combinatorial}. 
For the important case of the All-Or-Nothing (AON) affinity, the simplified objective~\eqref{eq:aon-simplified} admits a much simpler and faster specialized Louvain algorithm, which we describe in \Cref{sec:AON_louvain}. 
As we show in subsequent experiments, this specialized algorithm is highly scalable and effective in recovering ground truth clusters in data sets with polyadic structure plausibly modeled by the AON affinity. 

\subsection{Symmetric Hypergraph Maximum-Likelihood Louvain} \label{sec:sym_HMLL}
The first phase of our Symetric HMLL algorithm mirrors standard graph Louvain: nodes begin in singleton clusters and in turn greedily move to adjacent clusters until no more improvement is possible. Phase two of graph Louvain reduces edges between clusters into weighted edges between supernodes in a structure-preserving way. However, in the hypergraph case, naively collapsing clusters and hyperedges would discard important information about hyperedge sizes and the way each hyperedge is partitioned across clusters. Therefore, in subsequent stages of our algorithm, we greedily improve the objective by moving entire sets of nodes in the original hypergraph, rather than greedily moving individual nodes. In this way, a set of nodes that was assigned to the same cluster in a previous iteration is essentially treated as a supernode and moved as a unit, without collapsing the hypergraph and losing needed information about hyperedge structure.

Our overall procedure is formalized in Algorithms~\ref{alg:symHMLLstep} and~\ref{alg:symHMLL}. \Cref{alg:symHMLLstep} visits in turn each set of nodes $S_c$ that represents a cluster $c$ from a previous iteration. The algorithm evaluates the change $\Delta Q$ in the objective function $Q$ associated with moving all of $S_c$ to an adjacent cluster, and then carries out the move that gives the largest positive change to the objective. This is repeated until moving a set $S_c$ can no longer improve the objective. The entire Symetric HMLL method is summarized by running \Cref{alg:symHMLL}, which starts by placing every node in a singleton cluster before calling \Cref{alg:symHMLLstep}  for the first time.

\begin{algorithm}[t]
	\SetAlgoLined
	\DontPrintSemicolon
	\KwData{Hypergraph $H$, affinity function $\Omega$, current label vector $\vz$}
	\KwResult{Updated label vector $\vz'$}\; 
	
	$\mathcal{C} \gets \text{unique}(\vz)$   \tcp*{unique cluster labels in the initial clustering $\vz$}
	$\vz' \gets \vz$  \tcp*{By greedily moving clusters in $\vz$,  form new clustering $\vz'$}
	$\textit{improving} \gets \mathrm{true}$\\
	\While{\text{improving}}{
		$\textit{improving}  \gets \mathrm{false}$\\
		\For{$c$ in $\mathcal{C}$}
		{
			$S_c \gets \{i \colon z_i = c\}$ \tcp*{nodes with label $c$ in original clustering $\vz$} 
			$\mathcal{C}' \gets \text{unique}(\vz')$   \tcp*{cluster labels in the current clustering $\vz'$}
			\tcp{Set of clusters $\mathcal{A}_c$ in $\vz'$ that are adjacent to $S_c$}
			$\mathcal{A}_c \gets \bigcup_{e \in E} \{\tilde{c} \in \mathcal{C}' \colon \exists v \in V \text{ with } z_v' = \tilde{c} \text{ and } i \in S_c \text{ with } v,i \in e \}$ \\
			\tcp{maximum $\Delta$ and maximizer $c'$ of the change in $Q$ from moving set $S_c$ to an adjacent cluster $\tilde{c}$ in $\mathcal{A}_c $}
			$(\Delta, c') \gets \argmax_{\tilde{c} \in \mathcal{A}_c}\Delta Q(H,\Omega, \vz', S_c, \tilde{c})$\;
			\tcp{update $\vz'$ if improvement found}
			\If{$\Delta>0$}{
				\For{$i \in S_c$}{
					$z_i' \gets c'$\\
				}
				$improving \gets \text{true}$
			}
		}
	}
	\Return{$\vz'$}	
	
	\caption{SymmetricHMLLstep($H$, $\Omega$, $\vz$)} \label{alg:symHMLLstep}
\end{algorithm}

\begin{algorithm}[t]
	\SetAlgoLined
	\DontPrintSemicolon
	\SetKwRepeat{Do}{do}{while}
	\KwData{Hypergraph $H$, affinity function $\Omega$}
	\KwResult{Label vector $\vz$}\; 
	$\vz' \gets \vz \gets [n]$ \tcp*{assign each cluster to singleton} 
	\Do{$\vz \neq \vz'$}{
		$\vz \gets \vz'$\\
		$\vz' \gets\text{SymmetricHMLLstep}(H,\Omega,\vz)$}
	\Return{$\vz$}
	\caption{SymmetricHMLL($H$, $\Omega$)} \label{alg:symHMLL}
\end{algorithm}

\Cref{alg:symHMLLstep} relies on a function $\Delta Q$ which computes the change in the objective $Q$ associated with moving all nodes from $S_c$ to an adjacent cluster.  
Changes to the second (volume) term in the objective can be computed efficiently using combinatorial identities (\Cref{sec:eval_sums}, \Cref{prop:recursion}).  
Changes to the first (cut) term require summing across all hyperedges incident to a node or set of nodes. 
At each hyperedge, we must evaluate the affinity $\Omega(\vp)$ on the current partition $\vp$, as well as the affinity $\Omega(\vp')$ associated with the candidate updated partition $\vp'$.  
This situation contrasts with the case of the graph Louvain algorithm, in which it is sufficient to check whether a given edge joins nodes in the same or different clusters. 
The fact that we need to store and update the partition vector $\vp$ for each hyperedge is what prevents us from collapsing a cluster of nodes into a monolithic supernode and recursively applying \Cref{alg:symHMLL} on a reduced data structure, as customary in graph Louvain. 
Thus, while clusters of nodes move as a unit in \Cref{alg:symHMLLstep} as well, it is necessary in this case to operate on the full adjacency data $\cA$ at all stages of the algorithm. 
This can make \Cref{alg:symHMLL} slow on hypergraphs of even modest size. 
Developing efficient algorithms for optimizing the general symmetric modularity objective or various special cases is an important avenue of future work. 

\subsection{All-Or-Nothing Hypergraph Maximum-Likelihood Louvain}

When $\Omega$ is the All-Or-Nothing affinity function, considerable simplification is possible. 
For each edge we need not compute the full partition vector $\vp$ but only check whether or not $\norm{\vp}_0 = 1$. 
Rather than a general affinity function $\Omega$, we instead supply the parameter vectors $\vbeta$ and $\vgamma$ appearing in~\eqref{eq:aon-simplified}. 
This allows us to compute on considerably simplified data structures.
In particular, we are able to follow the classical Louvain strategy of collapsing clusters into single, consolidated ``supernodes,'' and restrict attention to hyperedges that span multiple supernodes. 
Because we do not need to track the precise \emph{way} in which the hyperedges span multiple supernodes, we can forget much of the original adjacency data $\cA$ and instead simply store the edge sizes of the hypergraph. 
These simplifications enable both significant memory savings and very rapid evaluation of the objective update  function $\Delta Q$. 
We provide pseudocode for exploiting these simplifications in \Cref{sec:AON_louvain}. 

\subsection{Number of Clusters}

	Like most Louvain-style modularity methods, the user cannot directly control the number of clusters returned by HMLL. 
	One approach to influence the number of clusters is to manually set values for the affinity function $\Omega$ or the parameters $\vbeta$ and $\vgamma$ (if using the All-Or-Nothing affinity). 
	Rather than inferring these parameters from data, one can set them \emph{a priori} and perform a single round of optimization over $\vz$. 
	This approach generalizes the common treatment of the resolution parameter in dyadic modularity maximization as a tuning hyperparameter rather than a number to be estimated from data \cite{reichardt2006statistical}. 
	Considerable experimentation may be required in order to obtain the desired number of clusters, and retrieving an exact number may not be possible. 

	Another approach to influencing the number of clusters is to impose a Bayesian prior on the community labels. 
	In the simplest version of a Bayesian approach, one assumes that each node is independently assigned one of $\bar{\ell}$ labels with equal probability, prior to sampling edges. 
	The probability of realizing a given label vector $\vz$ is then $\bar{\ell}^{-n}$, which generates a term of the form $-n\log \bar{\ell}$ in the log-likelihood $\cL$. 
	This term may then be incorporated into Louvain implementations, with the result that greedy moves which reduce the total number of clusters $\bar{\ell}$ are strongly incentivized. 
	The resulting regularized algorithm may then label vectors $\vz$ with slightly smaller numbers of distinct clusters. 
	This can be useful when it is known \emph{a priori} that the true number of clusters in the data is small. 
	Our implementation of AON HMLL incorporates this optional regularization term. 
	We use this term only in the synthetic detectability experiments presented in \Cref{fig:detectability}. 

\section{Experiments with Synthetic Data}

\subsection{Runtime}

Dyadic Louvain algorithms are known for being highly efficient in large graphs. 
Here, we show that AON HMLL can achieve similar performance on synthetic data to Graph MLL (GMLL), a variant of the standard dyadic Louvain algorithm in which we return the combination of resolution parameter and partition that yield the highest dyadic likelihood. 
For a fixed number of nodes $n$, we consider a DCHSBM-like hypergraph model with $\bar{\ell} = n/200$ clusters and $m = 10n$ hyperedges with size $k$ uniformly distributed between 2 and 4. 
Each $k$-edge is, with probability $p_k$,  placed uniformly at random on any $k$ nodes within the same cluster. 
Otherwise, with probability $1 - p_k$, the edge is instead placed uniformly at random on \emph{any} set of $k$ nodes. 
We use this model rather than a direct DCHSBM to avoid the computational burden of sampling edges at each $k$-tuple of nodes, which is prohibitive when $n$ is large. 
For the purpose of performance testing, we compute estimates of the parameter vectors $\vbeta$ and $\vgamma$ (in the case of AON HMLL) and the resolution parameter $\gamma$ (in the case of GMLL) using ground truth cluster labels. 
We emphasize that this is typically not possible in practical applications, since the ground truth labels are not known. 
We make this choice in order to focus on a direct comparison of runtimes of each algorithm in a situation in which both can succeed. 
In later sections, we study the ability of HMLL and GMLL to recover known groups in synthetic and empirical data when affinities and resolution parameters are not known. 

\begin{tuftefigure}[t]
	\includegraphics[width=\textwidth]{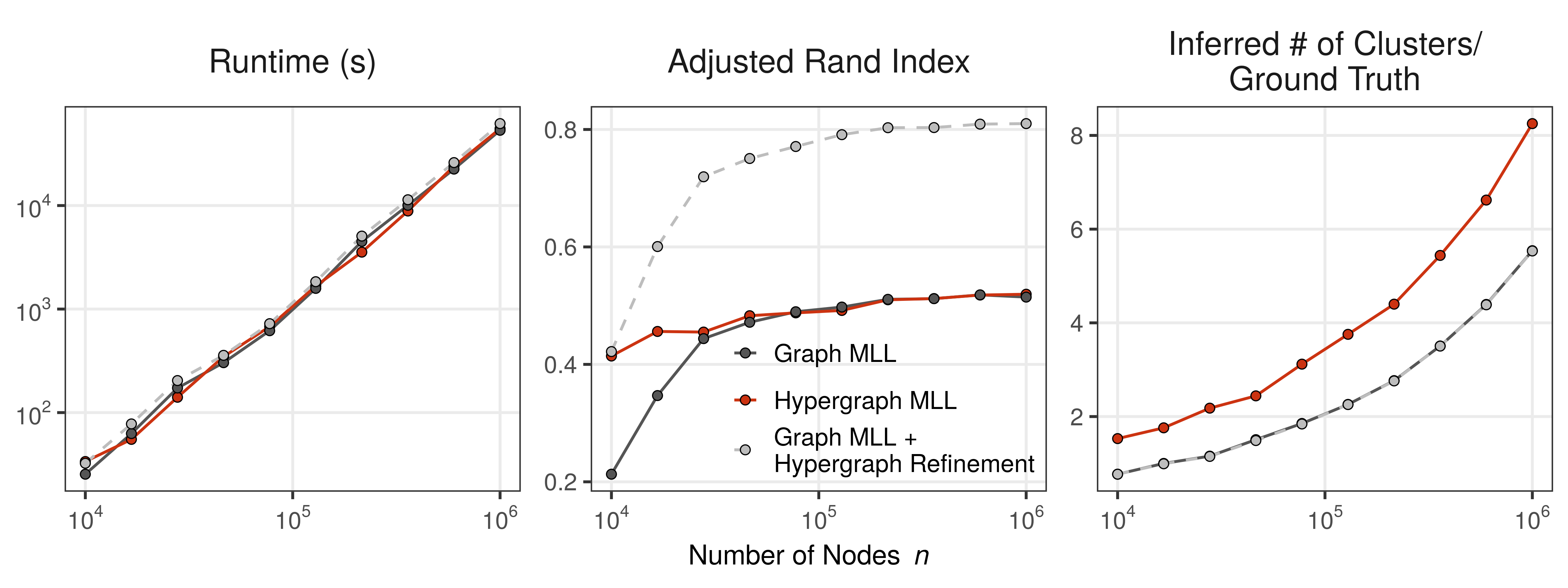}
	\caption{
		Runtime, Adjusted Rand Index, and number of clusters returned by GMLL and HMLL in a synthetic testbed with optimal affinity parameters. 
		The within-cluster edge placement probabilities are $p_2 = 3/5$, $p_3 = 1/n^3$, and $p_4 = 1/n^4$. 
		We also show in light gray the results obtained by using GMLL as a preprocessing step, whose output partition is then refined by HMLL (light gray). 
	} \label{fig:performance}
\end{tuftefigure}

\Cref{fig:performance} shows runtime, Adjusted Rand (ARI) and number of clusters returned on synthetic hypergraphs when $p_2 = 0.6$, $p_3 = 1/n^3$ and $p_4 = 1/n^4$. These parameter are chosen so that hyperedges of size three and four are rarely (if ever) contained completely inside clusters. Thus, hyperedges of different sizes provide different signal regarding ground truth clusters.
For this experiment, we implemented Graph MLL by computing a normalized clique projection, in which nodes are joined by weighted dyadic edges with weights 
\begin{align}
w_{ij} = \sum_{e: i,j \in e} \frac{1}{\abs{e} - 1}\;.
\end{align}
We also performed experiments on an unnormalized clique projection with $w_{ij} = \abs{e:i,j \in e}$, but do not show these results because, in this experiment, the associated MLL algorithm consistently fails to recover labels correlated with the planted clusters. 

On smaller instances, HMLL outperforms Graph MLL in recovering planted clusters, as measured by the ARI.
For larger instances, the recovery results are  comparable.
Interestingly, although GMLL and HMLL obtain similar accuracy in this experiment, they do so in different ways, with HMLL tending to generate more, smaller clusters than GMLL. 
Importantly, the runtimes are nearly indistinguishable, indicating that dyadic clique projections are necessary neither for accuracy nor for performance. 
We observed other choices of the parameters $p_2$, $p_3$, and $p_4$ in which HMLL substantially outperformed GMLL in cluster recovery and vice versa; however, in each case the algorithms' respective runtimes tended to differ by only a small constant factor. 

Interestingly, in this synthetic experiment, a combination of the two algorithms leads to the strongest recovery results. 
In addition to running each algorithm independently, we also ran a two-stage algorithm in which GMLL is used to generate an intermediate partition, and then HMLL is used to refine it. 
This procedure is similar to the warmstart approach from \citet{kaminski2020community}. 
We emphasize again that these results are obtained on synthetic hypergraphs with pre-optimized affinity parameters, and so the effectiveness of the refinement strategy may not generalize to real data sets. 
Indeed, in the experiments on empirical data shown in \Cref{sec:empirical}, we do not show results from the refinement procedure because the output partition was in each case essentially indistinguishable from the output of the dyadic partition. 
This may reflect the fact that we did not allow the algorithms to learn \emph{a priori} the affinity parameters associated with the true data labels. 
Further investigation into the performance of hybrid strategies would be of considerable practical importance. 

\subsection{Dyadic Projections and the Detectability Threshold}

Informally, an algorithm is able to \emph{detect} communities in a random graph model with fixed labels $\vz$ when the output labeling $\hat{\vz}$ of that algorithm  is, with probability bounded above zero, correlated with $\vz$. 
Using arguments from statistical physics, \citet{decelle2011asymptotic} conjectured the existence of a regime in the graph stochastic blockmodel in which no algorithm can successfully detect communities. 
This conjecture has since been refined and proven in various special cases; see \citet{abbe2017community} for a survey. 
In the dyadic stochastic blockmodel with two equal-sized planted communities, a necessary condition for detectability in the large-graph limit is 
\begin{align}
\frac{(c_i - c_o)^2}{2(c_i + c_o)} \geq 1\;, \label{eq:detectability_dyadic}
\end{align}
where $c_i$ is the mean number of within-cluster edges attached to a node, and $c_o$ is the mean number of between-cluster edges attached to a node. 
If this condition is not satisfied, no algorithm can reliably detect communities in the associated graph stochastic blockmodel, even though the communities are statistically distinct. 
This bound limits direct inferential methods, such as Bayesian or maximum-likelihood techniques, as well as methods based on maximization of modularity or other graph objectives \cite{nadakuditi2012graph}. 
Several recent papers have considered the detectability problem in the case of uniform hypergraphs \cite{ghoshdastidar2014consistency,Ghoshdastidar2017,Angelini2016}. 
In our model, the presence of edges of multiple sizes complicates analysis. 
Here, we limit ourselves to an experimental demonstration that the regimes of detectability for the graph SBM and our DCHSBM can differ significantly. 

\begin{tuftefigure}
	\includegraphics[width=\textwidth]{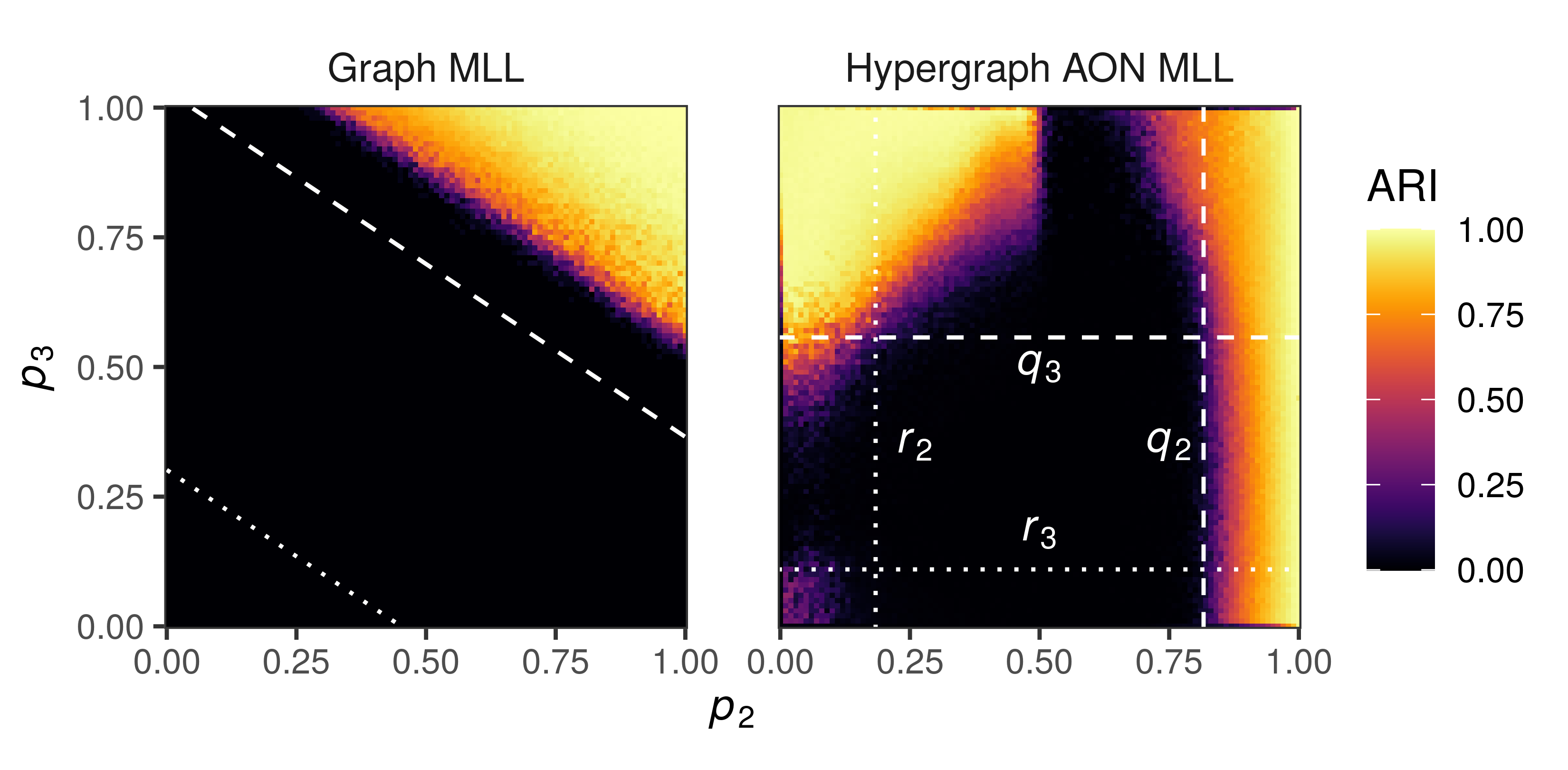}
	\caption{
		Detectability experiments in synthetic hypergraphs.
		For $i = 2,3$, $p_i$ is the proportion of within-cluster edges of size $i$.  
		Each pixel gives the mean ARI over 20 independently generated DCHSBMs of size $n = 500$ where each node is incident to, on average, five 2-edges and five 3-edges. 
		(Left) The recovered partition $\hat{\vz}$ is obtained from GMLL. 
		(Right) The recovered partition is obtained from AON HMLL (\Cref{alg:aonHMLL}). 
		The dashed  and dotted lines give various detectability thresholds as described in the main text. 
		In each panel, the returned partition $\hat{\vz}$ is highest-likelihood partition from 20 alternations between updating $\hat{\vz}$ and inference of the affinity parameters. 
		In this experiment only, we incorporate a regularization term $-n \log\bar{\ell}$ in the modularity objective in order to promote label vectors $\vz$ with fewer clusters. 
	} \label{fig:detectability}
\end{tuftefigure}

\Cref{fig:detectability} shows two experiments on a simple DCHSBM with two equal-sized communities of 250 nodes each. 
The affinity $\Omega$ is tuned so that: 
\begin{enumerate}
	\item Each node is incident to, on average, five 2-edges and five 3-edges. 
	\item A fraction $p_2$ of  2-edges join nodes in the same cluster, while a fraction $1-p_2$ of 2-edges join nodes in different clusters. 
	\item A fraction $p_3$ of 3-edges join nodes in the same cluster, while a fraction $1-p_3$ of 3-edges join nodes in different clusters. 
\end{enumerate}
In this experiment only, both GMLL and AON HMLL discussed below use the Bayesian regularization term $-n \log \bar{\ell}$ in the likelihood objective in order to encourage each algorithm to form a relatively small number of clusters. 

In the lefthand panel, we show the Adjusted Rand Index (ARI) of the returned partition $\hat{\vz}$ against the true partition $\vz$ when using the unnormalized variant of GMLL (results for the normalized variant are similar). 
This choice reflects the fact that the true number of clusters is known and is equal to 2. 
The dashed and dotted white lines give the boundaries at which \eqref{eq:detectability_dyadic} holds with equality. 
The dashed white line gives the detectability threshold for the assortative regime in which nodes are more likely to link with others in the same cluster. 
Louvain, as an agglomerative algorithm, is well-suited for detecting assortative clusterings, and is able to detect communities in much, but not all, of this regime. 
The gap between the theoretical threshold and the performance of Louvain reflects the fact that Louvain, as a stagewise greedy algorithm, possesses no optimality guarantees. 
There is also a disassortative detectable region below the dotted white line. 
The agglomerative structure of graph-based Louvain causes the algorithm to entirely fail here. 

In the righthand panel, we show the same experiment using AON HMLL. 
The dashed lines $q_2$ and $q_3$ give assortative detectability thresholds for hypothetical algorithms which entirely ignore 3-edges and 2-edges, respectively, while the dotted lines $r_2$ and $r_3$ give the corresponding disassortative thresholds.  
HMLL is able to detect the planted partition for a range of parameter values in which GMLL is not. 
These include the case in which edges of certain sizes are largely between-cluster, as shown in the top left (small $p_2$) and bottom right (small $p_3$). 
There is a regime (mid and bottom right) in which the algorithm appears to be constrained by the boundary $q_2$, suggesting that HMLL is effectively ignoring 3-edges in this regime.
As $p_3$ increases, however, 3-edges become more informative and the partition can be detected for some values $p_2 < q_2$ (top right). 
There is also a broad regime (top left) in which the hypergraph algorithm is able effectively to use both 2- and 3-edges to detect clusters, even when 2-edges are largely between-cluster.
We also observe some very limited ability of HMLL to detect clusters in the regime in which both 2-edges and 3-edges are between-cluster (bottom left). 
Since HMLL is again an agglomerative algorithm, its performance for fully disassortative partitions such as these is unreliable at best. 

Intriguingly, there are also combinations of $p_2$ and $p_3$ in which GMLL is able to detect the planted partition while HMLL is not. 
This may indicate that the pooling of edges of different sizes implied by the dyadic projection can be useful in some regimes. 
We note again that neither GMLL or HMLL are optimal inference algorithms. 
An optimal hypergraph algorithm might significantly extend the detectable regime in the right panel of \Cref{fig:detectability}. 
We pose the development of such algorithms, as well as their analysis, as highly promising avenues for future research. 

\section{Experiments with Empirical Data} \label{sec:empirical}

Next, we analyze several hypergraphs derived from empirical data.
The first two are hypergraphs of human close-proximity contact interactions~\cite{Benson-2018-simplicial},
obtained from wearable sensor data at a primary school~\cite{Stehl-2011-contact} and a high school~\cite{mastrandrea2015contact}.
Nodes are students or teachers, and a hyperedge connects groups of people that were all jointly in proximity to one another.
Node labels identify the classrooms to which each student belongs, and
the primary school data also includes a teacher associated to each class. 
Next, we created two hypergraphs from U.S.\ Congressional bill cosponsorship data~\cite{fowler2006connecting,fowler2006legislative}, where nodes correspond to congresspersons, and hyperedges correspond to the sponsor and all cosponsors of a bill in either the House of Representatives or the Senate.
We constructed another pair of data sets from the U.S.\ Congress in the form of committee memberships \cite{committees2021}. 
Each edge is a committee in a meeting of Congress, and each node again corresponds to a member of the House or a senator. 
A node is contained in an edge if the corresponding legislator was a member of the committee during the specified meeting of Congress. 
The 103rd through 115th Congresses are represented, spanning the years 1993--2017. 
There are again separate data sets for House and Senate members. 
In all congressional data sets, the node labels give the political parties of the members.
We also used a hypergraph of Walmart purchases~\cite{amburg2020clustering}, where each node is a product and a hyperedge connects a set of products that were co-purchased by a customer in a single shopping trip.
Each node has an associated product category label.
Finally, we constructed a hypergraph where nodes correspond to hotels listed at \texttt{trivago.com},
and each hyperedge corresponds to a set of hotels whose web site was clicked on by a user of Trivago within a browsing session. 
This hypergraph was derived from data released for the 2019 ACM RecSys Challenge contest.%
\footnote{\url{https://recsys.acm.org/recsys19/challenge/}}
For each hotel, the node label gives the country in which it is located.
The data sets vary in size in terms of the number of nodes, hyperedges, hyperedge sizes, and node labels (Table~\ref{tab:data sets}).

\begin{table}
	\caption{Summary of study data sets. 
		Shown are the number of nodes $n$, number of hyperedges $m$, mean degree $\bracket{d}$, standard deviation of degree $s(d)$, mean edge size $\bracket{k}$, standard deviation of edge size $s(k)$, and number of data labels $\bar{\ell}$. 
	}\label{tab:data sets}
	\centering
	\begin{tabular}{l  r  r r r r r r}
		\toprule
		&  $n$ & $m$ & $\bracket{d}$ & $s(d)$ & $\bracket{k}$ & $s(k)$ & $\bar{\ell}$ \\
		\midrule
		\texttt{contact-primary-school} & 242 & 12,704 & 127.0 & 55.3 & 2.4 & 0.6  & 11 \\
		\texttt{contact-high-school}    & 327 & 7,818 & 55.6 & 27.1 & 2.3 & 0.5  & 9 \\
		\texttt{house-bills}            & 1,494 & 43,047 &274.0 & 282.7& 9.5 & 7.2 & 2 \\
		\texttt{senate-bills}           & 293 & 20,006 &493.4 & 406.3& 7.3 & 5.5 & 2 \\
		\texttt{house-committees}       & 1,290 & 340 &9.2 & 7.1& 35.2 & 21.3 & 2 \\
		\texttt{senate-committees}      & 282 & 315 &19.0&  14.7& 17.5 & 6.6 & 2 \\
		\texttt{walmart-purchases}      & 88,860 & 65,979 &5.1 & 26.7& 6.7 & 5.3 & 11 \\
		\texttt{trivago-clicks}         & 171,495 & 220,758 & 4.0 & 7.0 & 3.2 & 2.0 & 160 \\
		\bottomrule
	\end{tabular}
\end{table}

\subsection{Model Comparison and Higher-Order Structure}

It is often stated that higher-order features are important for understanding the structure and function of complex networks. 
It is less often clarified what \emph{kinds} of higher-order features are relevant \emph{for which} networks. 
Generative modeling provides one way to compare different kinds of higher-order structure. 
In the DCHSBM, this structure is specified by the affinity function $\Omega$. 
Comparison of the likelihood functions obtained by each affinity can indicate which one is most plausible as a higher-order generative mechanism of the underlying data. 
We performed such a comparison using the symmetric affinity functions from Table~\ref{tab:symmetric_affinities}
and the labels for nodes described above. 
In this setup, we can compute an approximate ML estimate for $\Omega$, given its functional form.
In order to make concrete comparisons, it is necessary to specify the functional forms of the Group Number (GN), Relative Plurality (RP), and Pairwise (P) affinities. 
We use the following parameterizations: 
\begin{align}
\Omega(\vp) &= \omega_{{\norm{p}_0}, k} \tag{Group Number} \\ 
\Omega(\vp) &= 
\begin{cases}
\omega_{k1} & p_1 - p_2 < \frac{k}{4} \\ 
\omega_{k0} & \text{otherwise}
\end{cases} \tag{Relative Plurality} \\
\Omega(\vp) &= 
\begin{cases}
\omega_{k1} & \sum_{i \neq j} p_ip_j < \frac{k(k-1)}{4} \\ 
\omega_{k0} & \text{otherwise}
\end{cases} \tag{Pairwise} 
\end{align}
The Group Number affinity function assigns a separate parameter to each combination of edge size and number of groups. 
The Relative Plurality affinity function assigns one parameter for the case that the difference between the largest and second largest groups within an edge exceeds $k/4$, where $k$ is the size of the edge. 
The Pairwise affinity function assigns one parameter to the case that the total number of dyadic pairs in differing groups exceeds half the possible number of such pairs. 
RP, which favors edges which the two most common labels are roughly balanced in representation, is qualitatively distinct from AON, GN, and P, all of which favor edges with homogeneous cluster labels. 

Because these affinity functions possess different numbers of parameters, we compare them via the Bayesian Information Criterion (BIC) \cite{schwarz1978estimating}, which penalizes affinity functions with more parameters than are supported by the data. 
In computing the BIC, we exclude the $n$ parameters $\vtheta$, as these are the same in each model and therefore contribute an unimportant additive constant. 
The AON, RP, and P affinities each have $2\bar{k}$ parameters. 
In the case of GN, we compute the number of possible parameters for each edge size $k$ by computing the number of possible groups using the number of distinct labels in the given partition. 
For example, if the given partition contained only three distinct groups, then we do not posit parameters corresponding to edges containing more than three groups. 
It would also be reasonable to remove this restriction, in which case there would be $k$ parameters for edges of size $k$ regardless of $\vz$. 

Table~\ref{tab:comparative_omega} shows the BIC for the DCHSBM using each of these affinity functions. 
Importantly, no single affinity function is preferred across all of the study data sets, suggesting the presence of different kinds of polyadic structure. 
In the two congressional committee data sets, RP achieves the optimal BIC, while in each of the other data sets, one of the three affinities that promotes edge homogeneity is instead preferred. 
There are also important differences between these three affinities. 
In \texttt{house-bills}, the Pairwise affinity function achieves the lowest BIC overall, while in \texttt{walmart-purchases} the Pairwise affinity is preferred over all but the Group Number affinity. 
This suggests that a model involving only pairwise comparison of node labels can provide relatively strong generative explanations of the data in these cases. 
This in turn suggests that dyadic algorithms may perform at least as well on these data sets as their polyadic counterparts. 
Indeed, as we will see below, in both of these data sets, dyadic algorithms can return clusterings more correlated with ground truth than those returned by AON HMLL. 

\begin{table}
	\centering 
	\begin{tabular}{l r r r r r}
		\toprule
		& AON & GN & RP & P &  \\ 
		\midrule 
		\texttt{contact-high-school} & $2.2003$ & $\mathbf{2.1946}$ & $2.4330 $ & $2.2003 $ & $\times 10^5$ \\ 
		\texttt{contact-primary-school} & $4.1954 $ & $\mathbf{4.1646}$ & $4.3990 $ & $4.1954 $ & $\times 10^5$  \\ 
		\texttt{house-committees} & $2.7128 $  & $2.7128 $ & $\mathbf{2.7119} $ & $2.7127$ & $\times 10^5$ \\ 
		\texttt{senate-committees} & $9.7934$ & $9.7934 $ & $\mathbf{9.7736}  $ & $9.7933  $ & $\times 10^4$\\ 
		\texttt{house-bills} & $9.9719 $ &$9.9720 $ & $10.003$ & $\mathbf{9.9670}  $  & $\times 10^6$\\
		\texttt{senate-bills} & $\mathbf{3.1925} $ &$3.1926$ & $3.2030 $ & $3.1925 $ & $\times 10^6$ \\
		\texttt{walmart-purchases} & $1.0763$ &$\mathbf{1.0753}$ & $1.0806 $ & $1.0758 $ & $\times 10^6$ \\
		\texttt{trivago-clicks} & $\mathbf{1.1982} $ &$1.2015$ & $1.4364 $ & $1.2054 $ & $\times 10^7$ \\
		\bottomrule
	\end{tabular}
	\caption{
		Bayesian Information Criteria (BIC) of the DCHSBM using the the All-Or-Nothing (AON), Group Number (GN), Relative Plurality (RP), and Pairwise (P) affinity functions on our full study data sets. 
		Definitions of each affinity function are supplied in Table~\ref{tab:symmetric_affinities}.  
		Lower BIC indicates a more plausible model. 
		The affinity function achieving the lowest BIC in each data set is shown in \textbf{bold}. 
	} \label{tab:comparative_omega}
\end{table}

\subsection{Recovering Classes in Contact Hypergraphs}

To test the AON HMLL algorithm itself, we first study its behavior in the \texttt{contact-primary-school} and \texttt{contact-high-school} networks. 
The comparison of BIC scores from Table~\ref{tab:comparative_omega} suggests that GN may be the most explanatory model of the data, but we instead use AON in order to take advantage of its considerable computational benefits. 
We performed 20 alternations between AON HMLL and estimation of the AON parameters, and returned the partition with the highest DCHSBM likelihood. 
We compare the results to two dyadic methods. 
Each step of the Graph Louvain algorithm alternates between using the standard Louvain algorithm \cite{blondel2008fast} to infer clusters and estimating the resolution parameter $\gamma$ using the approximate maximum-likelihood framework of \cite{newman2016equivalence}. 
Graph Louvain returns the partition which maximizes the classical dyadic modularity objective. 
We also compare to Graph Maximum-Likelihood Louvain (GMLL), which carries out the same alternation, but instead returns the partition that maximizes the approximate log-likelihood of the corresponding planted partition stochastic blockmodel.

\begin{tuftefigure}
	\includegraphics[width=\textwidth]{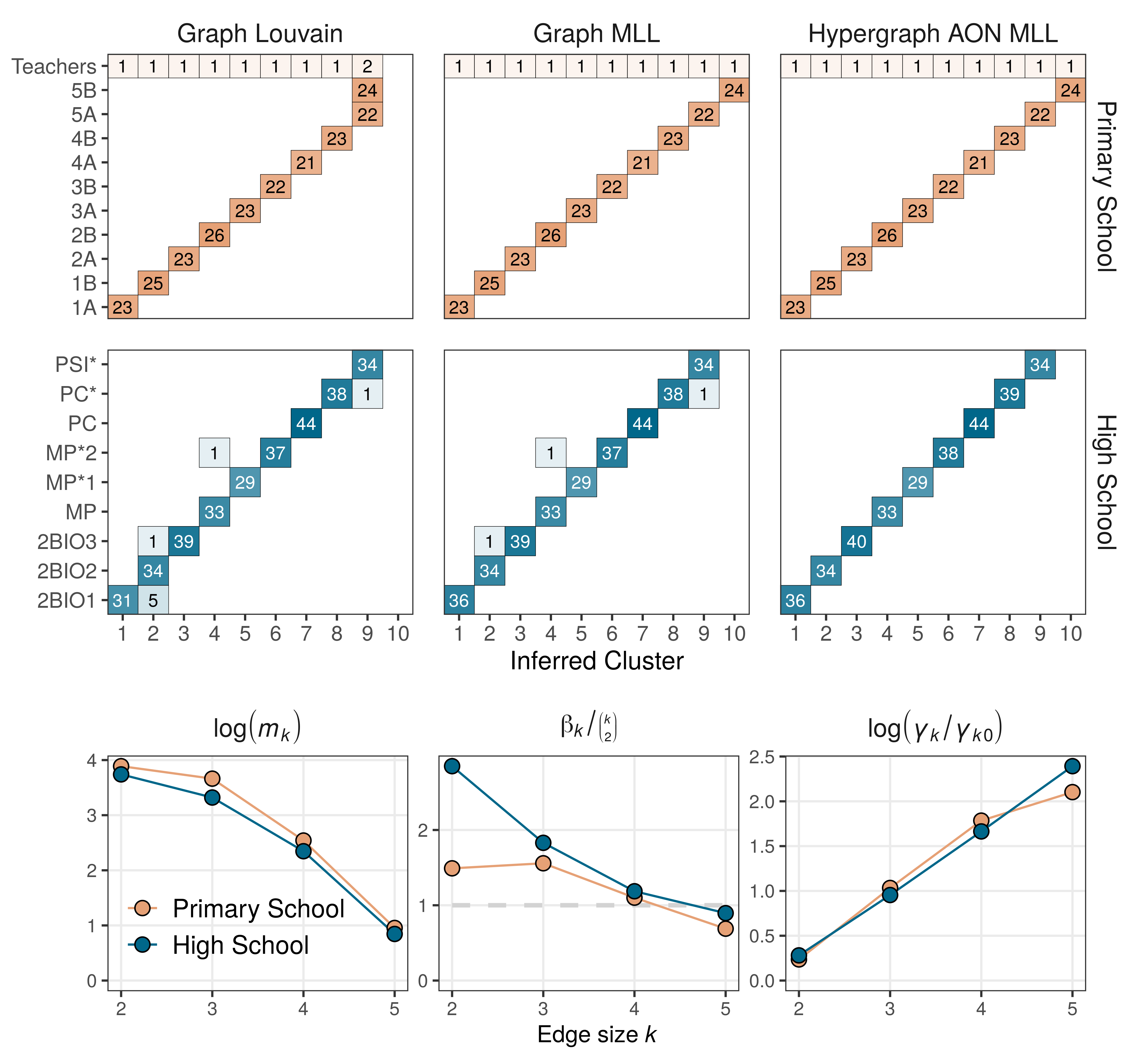}
	\caption{
		Comparison of clustering algorithms in \texttt{contact-primary-school} and  \texttt{contact-high-school}. 
		For each data set, we show a partition obtained from the classical graph Louvain modularity maximization heuristic; a partition obtained from Graph Maximum-Likelihood Louvain (GMLL); and partition obtained by AON HMLL. 
		The partition shown is the one which attains the corresponding objective function after 20 rounds of iterative likelihood maximization. 
		Each box records the number of agents with the specified combination of inferred cluster and ground truth label. 
		The bottom row visualizes the number $m_k$ of edges of size $k$, the inferred size weights $\beta_k$, and inferred resolution parameters $\gamma_k$ as defined in \eqref{eq:aon-simplified}. 
		On the far right, $\gamma_{k0} = m_k / \vol(H)^k$.  
	} \label{fig:contact-clustering}
\end{tuftefigure}

\Cref{fig:contact-clustering} compares the performance of each of these algorithms. 
In the case of \texttt{contact-primary-school}, we consider the ground truth partition to be the one that assigns exactly one teacher to each class. 
Graph Louvain is able to find partitions of students with clear correlations with the given class labels, but conflates two primary school classes and splits several high school classes (left column, top two rows). 
Graph MLL is able to perfectly recover the primary school student class labels, and misclassifies three high school students. 
Our proposed AON HMLL is able to correctly recover the given partitions in both data sets. 

We can obtain some qualitative insight into the behavior of HMLL by studying the structure of the inferred affinity function $\Omega$. 
The most intuitive way to do so is through the derived parameters $\beta_k$ and $\gamma_k$ from eq.\ \eqref{eq:aon-simplified}. 
The bottom row of \Cref{fig:contact-clustering} shows these parameters, as well as the distribution of edge sizes. 
The dependence of $\beta_k$ on edge size $k$ provides one explanation of why Graph MLL succeeds in  \texttt{contact-primary-school} but makes several errors in \texttt{contact-high-school}. 
Under the standard dyadic projection, a $k$-hyperedge generates $\binom{k}{2}$ 2-edges, and therefore appears in the dyadic modularity objective $\binom{k}{2}$ distinct times. 
In the case of \texttt{contact-primary-school}, the estimated importance parameter $\beta_k$ is indeed relatively close to $\binom{k}{2}$ (bottom center panel of \Cref{fig:contact-clustering}). 
At the optimal partition, the relative weights of edges are therefore distorted relatively little by the clique projection. 
On the other hand, the estimates for $\beta_k$ in \texttt{contact-high-school} deviate considerably from $\binom{k}{2}$, especially for $k = 2,3$.  
Here, small edges feature much more prominently in the polyadic modularity objective than they do in the projected dyadic objective, implying that the latter is a poorer approximation to the former near the optimal partition. 
This difference may explain the small errors in Graph MLL in \texttt{contact-high-school}. 
The bottom-right panel of \Cref{fig:contact-clustering} compares the inferred value of the size-specific resolution parameter $\gamma_k$ to $\gamma_{k0} = m_k/\vol(H)^k$, the implicit value used in \cite{Kami2018}. 
The inferred resolution parameters are consistently larger $\gamma_{k0}$ and increase with $k$, highlighting the value of adaptively estimating these parameters in our approach. 

\subsection{Cluster Recovery with Large Hyperedges}

In \Cref{fig:recovery_experiments}, we study the ability of AON HMLL to recover ground truth communities in several more of our study data sets.
Unlike the two contact networks, each of these data sets contains edges of size up to 25 nodes.  
We have excluded \texttt{house-committees} and \texttt{senate-committees} on the grounds that these data sets are \emph{disassortative}, indicating that AON is clearly inappropriate. 
We compare AON HMLL to two variants of GMLL. 
In the unnormalized variant, we obtain a dyadic graph by replacing each $k$-edge with a $k$-clique, thus generating a total of $\binom{k}{2}$ dyadic edges. 
In the normalized variant, we weight each edge in the $k$-clique by a factor of $\frac{1}{k-1}$. 
The normalized dyadic degree of each node is then equal to its degree in the original hypergraph. 
In either case, we then alternate between the dyadic Louvain algorithm for estimating clusters and conditional maximum likelihood inference of the resolution parameter $\gamma$. 
In each trial, we perform 20 iterations of AON HMLL as well as the two GMLL variants, returning from these the combination of group labels and parameters that achieves the highest likelihood. 
We then compare the clustering to the ground truth labels via the Adjusted Rand Index. 
We vary the maximum edge size $\bar{k}$ in order to show how each algorithm responds to the incorporation of progressively larger edges. 
Because extreme sparsity poses issues for community detection algorithms in general \cite{abbe2017community}, we show experiments for progressively denser cores of \texttt{trivago-clicks} and \texttt{walmart-purchases}. 
The $c$-core of of a hypergraph $H$ is defined as the largest sub-hypergraph $H_c$ such that all nodes in $H_c$ have degree at least $c$. 

\begin{tuftefigure}
	\includegraphics[width=\textwidth]{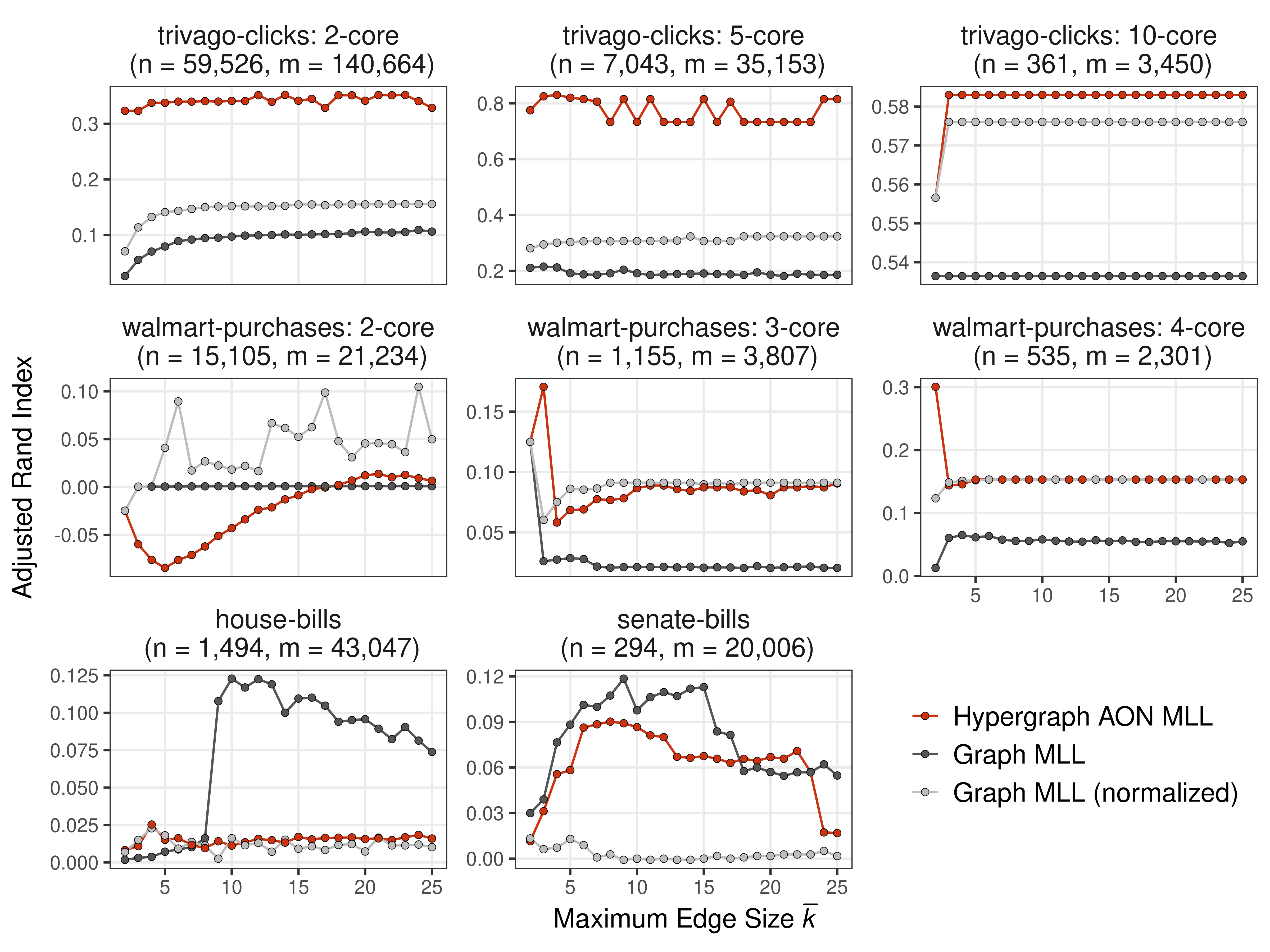}
	\caption{
		Comparison of Hypergraph All-Or-Nothing MLL \cref{alg:aonHMLL} against dyadic likelihood Louvain in data with known clusters. 
		Points give the Adjusted Rand Index of the highest-likelihood partition obtained after 20 alternations between partitioning and parameter estimation. 
		The maximum edge size $\bar{k}$ varies along the horizontal axis.
		In the panel titles, $n$ is the number of nodes and $m$ the number of edges when $\bar{k} = 25$. 
		Note that the vertical axis limits vary between panels.  
	} \label{fig:recovery_experiments}
\end{tuftefigure}

The results highlight the strong dependence of the performance of AON HMLL on the relative plausibility of the AON affinity function as a generative mechanism for the data (cf.~Table~\ref{tab:comparative_omega}). 
In \texttt{trivago-clicks}, the AON affinity function achieved the lowest BIC of all four candidates. 
Because AON is a more plausible generating mechanism by this metric, it is not surprising that AON HMLL is able to find partitions considerably more correlated with the supplied data labels than those returned by the dyadic variants. 
In \texttt{walmart-purchases}, on the other hand, the Pairwise affinity is preferred to AON. 
In this case, AON HMLL performs much worse, and in the 2-core even returns clusters which are anticorrelated with the supplied labels. 
As weakly-connected nodes are removed and the resulting data becomes denser, HMLL begins to return correlated clusters. 
However, the normalized GMLL variant is at least as effective in recovering the data labels. 
In the two Congressional bills data sets, the Pairwise affinity achieves a lower BIC than AON in the House and a comparable one in the Senate. 
Echoing this finding, a dyadic method outperforms AON HMLL in each of these cases. 
Interestingly, unnormalized GMLL performs best in \texttt{house-bills} and \texttt{senate-bills}, while normalized GMLL is preferable in \texttt{walmart-purchases}.
In addition, HMLL is the worst algorithm only in the case of the 2-core of \texttt{walmart-purchases} for small $\bar{k}$. 
HMLL may therefore be the algorithm of choice in cases when it is not known whether normalized or unnormalized dyadic representations are more appropriate for the data. 

When interpreting these recovery results, it is important to contextualize them against the limitations of community detection methods in general and of modularity maximization in particular. 
There is no ``best algorithm'' for community detection that does not make implicit assumptions about the structure of the data, and mismatch of algorithms to data sets can generate misleading results \cite{peel2017ground}.%
\footnote{That being said, a test similar to BESTest~\cite{peel2017ground}
	reveals that the likelihood under DCHSBM is much greater than the likelihood under random label permutations in most data set configurations,
	implying significant correlation between network structure and the labels.}
Even when the data generating process indeed matches algorithmic assumptions---such as a synthetic data set generating from a stochastic blockmodel---optimal algorithms may fail to detect planted communities due to sparsity \cite{decelle2011asymptotic,abbe2017community}.
Greedy modularity maximization, including the Louvain variants considered here, only finds one of possibly many local optima \cite{good2010performance}, some of which may be largely uncorrelated with each other. 
These considerations imply that (a) we cannot rule out the existence of other local optima which might achieve higher scores in any of the three algorithms and (b) the fact that an algorithm fails to recover a clustering close to the ground truth does not imply that it is ``failing'' in its stated objective, namely, local likelihood maximization. 
Overall, our results suggest that, when the assumptions of the DCHSBM with All-Or-Nothing affinity are appropriate to the data, AON HMLL can outperform dyadic approaches in recovering ground truth communities. 
In practice, since we often do not have access to ground truth labels, the question of whether or not the assumptions are appropriate to the data should be informed by domain expertise. 

\section{Discussion}

We have proposed a generative approach for clustering polyadic data, grounded in a degree-corrected hypergraph stochastic blockmodel (DCHSBM). 
From this model we have derived a symmetric, modularity-like objective, which includes the All-Or-Nothing (AON) modularity objective as an important special case. 
This derivation connects hypergraph modularity objectives to concrete modeling assumptions, which can be tuned in response to domain expertise. 
We have also formulated Louvain-like algorithms for optimizing these objectives, which are highly scalable in the case of the AON affinity function. 
Embedding this heuristic within an alternating approximate maximum likelihood scheme allows adaptive estimation of both node clusters and affinity parameters. 
We have shown experimentally that hypergraph algorithms possess markedly different detectability regimes from dyadic algorithms. 
We have also conducted experiments on empirical data, finding that hypergraph methods are preferred to dyadic ones in data sets where their modeling assumptions are well-founded. 

Our work points toward many directions of further research. 
One of these directions is algorithmic. 
Our greedy coordinate-ascent framework for inference in the DCHSBM has several important limitations. 
First, because we rely on an NP-hard optimization step, global maximization of the likelihood is never assured. 
Second, even exact maximum-likelihood itself is limited as an inference paradigm, as it uses information contained only within a small part of the likelihood landscape. 
Our method, as an approximation which is exact only when clusters are of roughly equal sizes, may also suffer from estimation bias. 
Third, the edgewise agglomerative approach embodied by Louvain-style algorithms is limited in applicability to affinity functions that promote homogeneity within edges. 
Alternative inference paradigms may ameliorate some or all of these limitations. 
Within the framework of maximum-likelihood inference, directly maximizing a profile likelihood offers an intriguing alternative to coordinate ascent \cite{bickel2009nonparametric}. 
While all maximum-likelihood methods are equivalent insofar as they optimize the same objective function, algorithmic properties such as runtime and propensity to be trapped in undesirable local optima may vary between different approaches. 
Fully Bayesian treatments \cite{peixoto2019bayesian} offer another promising path, although these are sometimes limited in their computational scalability. 
Variational belief-propagation \cite{decelle2011asymptotic, zhang2014scalable} provides an intriguing compromise, achieving considerable scalability in exchange for several approximations. 
Recent work \cite{Angelini2016} has made progress in this direction, but several questions related to scalability and behavior in nonuniform hypergraphs remain extant. 
Belief-propagation methods for scalable inference with more general affinity functions would be of particular practical interest.

There are also several important directions of theoretical development. 
One of these is the question of detectability in the DCHSBM. 
Because the DCHSBM is more flexible than the dyadic DCSBM, the theory of detectability in this model may be substantially more complex. 
Another direction concerns the properties of the dyadic modularity objective that extend to the hypergraph modularity objectives discussed here. 
In addition to its role as a comparison against null models \cite{newman2006modularity} and as a term in the DCSBM likelihood \cite{newman2016equivalence}, the dyadic modularity also expresses the stability of diffusion processes on graphs \cite{delvenne2010stability} and the energy of discrete surface tensions defined on graphs \cite{boyd2019stochastic}. 
Extensions of these properties, or explanations of why they fail to generalize, would be helpful for both theorists and practitioners. 

\subsection*{Acknowledgments}

We are grateful to Tiago de Paula Peixoto for pointing out the distinction between exact and approximate maximum likelihood estimation of $\vtheta$ and $\Omega$ as discussed at the end of \Cref{sec:symmetric_omega}. 

\subsection*{Funding}
This research was supported in part by 
ARO Award W911NF19-1-0057,
ARO MURI,
NSF Award DMS-1830274,
and JP Morgan Chase \& Co.

\subsection*{Competing Interests}
The authors declare that they have no competing interests.

\subsection*{Author Contributions}
PSC, NV, and ARB all conceived the research, developed algorithms, performed experiments, and wrote the manuscript.

\subsection*{Software and Data}

Software and data sufficient to reproduce and extend the experiments and analysis in this paper are available at the following repository:
{\center\url{https://github.com/PhilChodrow/HypergraphModularity}}. \\ 

\noindent The data are also hosted in packaged format at 
{\center\url{https://www.cs.cornell.edu/~arb/data/#hyperlabels}}.

\pagebreak
\begin{fullwidth}
    \bibliographystyle{dgleich-bib}
    \bibliography{refs}
\end{fullwidth}

\pagebreak

\begin{flushright}%
    \textbf{\MakeTextUppercase{\allcapsspacing{Supplementary Information}}} \\ 
    \textit{Generative Hypergraph Clustering: From Blockmodels to Modularity}
    \bigskip\par%
\end{flushright}\noindent%

\appendix

\section{Unidentifiability in $\theta$ and $\Omega$}
\label{sec:unidentifiable}
Fix $\vtheta$ and $\Omega$, and fix $\ell \in [\bar{\ell}]$ and $s \in \R$. If we do not provide any additional normalization condition, we can construct a modified vector $\vtheta'$ given by 
\begin{align}
\theta_i' = \begin{cases}
s\theta_i & \quad z_i = \ell \\ 
\theta_i & \quad \text{otherwise.}
\end{cases}
\end{align}
Define the modified affinity function $\Omega'(\vz_R) = {s}^{-k_{\ell}(\vz_R)}\Omega(\vz_R)$, where $k_{\ell}(\vz_R)$ is the number of times that label $\ell$ appears in $\vz_R$. 
By construction, $\cL(\vz, \Omega, \vtheta) = \cL(\vz, \Omega', \vtheta')$. 
Thus, we are unable to distinguish a single choice of $\vtheta$ and $\Omega$ from the infinite family of possibilities obtained by varying $s$. 
The normalization \eqref{eq:normalization} is used to select a single choice of parameters from this family. 

\section{Maximum-likelihood Estimation of $\theta$} 
\label{sec:proof_constant}
Recall that we choose to constrain $\vtheta$ such that 
\begin{align}
\sum_{i = 1}^n\theta_i\delta(z_i,\ell) = \vol(\ell),\;\; \ell = 1, \ldots, \bar{\ell}\;, \label{eq:normalization2}
\end{align}
where $\vol(\ell) = \sum_{i = 1}^n d_i\delta(z_i, \ell)$, with $d_i$ representing the (weighted) number of hyperedges in which node $i$ appears. Letting $k_i(R)$ be the number of times nodes $i$ show up in tuple $R$, we can write $d_i = \sum_{R \in \cR}k_i(R)a_R$.

In order to prove that the maximum likelihood estimate of $\vtheta$ is $\hat{\vtheta} = \vd$ (the formula in \cref{eq:hat-d}), we will prove the following identity
\begin{align}
\sum_{R\in \cR} b_R\pi(\vtheta_R) \Omega(\vz_R) = \sum_{k = 1}^{\bar{k}}\sum_{\vy\in[\bar{\ell}]^k}\Omega(\vy) \prod_{y \in \vy} \vol(y)\;. \label{eq:constant}
\end{align}
This implies that the second term of $Q$ does not depend on $\vtheta$ once the normalization~\eqref{eq:normalization} has been enforced. Since the first term of $Q$ is also independent of $\vtheta$, we can compute the estimate $\hat{\vtheta}$ by maximizing $K(\vtheta)$ alone. We write 
\begin{align}
K(\vtheta) = \sum_{R\in \cR}a_R \log \pi(\vtheta_R)= \sum_{R\in \cR}a_R\sum_{j \in R} \log \theta_j \;.  \label{eq:ktheta}
\end{align}
The first-order optimality conditions on $\vtheta$ from the objective~\eqref{eq:ktheta} with constraint~\eqref{eq:normalization2} 
imply that $K(\vtheta)$ is maximized with respect to $\theta_i$ when we set $\hat{\theta}_i = \sum_{R\in \cR}k_i(R) a_R = d_i$. Thus, the maximum likelihood estimate of $\vtheta$ is $\hat{\vtheta} = \vd$ as long as identity~\eqref{eq:constant} holds. 

We prove identity~\eqref{eq:constant} by direct calculation, after first introducing some useful notation. Recall that $\mathcal{R}$ continues to represent sets of \emph{unordered} node tuples from a set of $n$ nodes. We will use $\mathcal{T}$ to represent the set of \emph{ordered} node tuples. For a hyperedge size $k$, let $\mathcal{T}_k$ specifically refer to the set of ordered sets of $k$ nodes, where again we have allowed repeated nodes. For every $T \in \mathcal{T}$, $\vz_T$ is the set of cluster labels for nodes in $T$, and $\vtheta_T$ is a vector of degree parameters. Recall that $b_R$ represents the number of ways we can order elements from a set $R \in \mathcal{R}$. This is given formally by $\binom{k}{k_1(R),\ldots,k_n(R)}$, where $k_i(R)$ is the number of times that node $i$ appears in $R$. Thus, every $R \in \mathcal{R}$ is associated with $b_R$ different tuples in $\mathcal{T}$. 
We then have
\begin{align}
\sum_{R\in \cR} b_R \pi(\vtheta_R) \Omega(\vz_R) &= \sum_{k = 1}^{\overline{k}} \sum_{T\in \cT_k} \pi(\vtheta_T) \Omega(\vz_T) \\ 
&= \sum_{k = 1}^{\overline{k}} \sum_{T\in \cT_k} \left(\prod_{i = 1}^n\theta_i^{t_i}\right) \Omega(\vz_T) \\
&= \sum_{k = 1}^{\overline{k}} \sum_{\vy \in [\overline{\ell}]^k} \Omega(\vy)\sum_{T\in \cT_k} \delta(\vz_T, \vy)  \left(\prod_{i=1}^n\theta_i^{t_i}\right)\;. \\ 
&= \sum_{k = 1}^{\overline{k}} \sum_{\vy \in [\overline{\ell}]^k} \Omega(\vy)\prod_{y \in \vy} \sum_{i = 1}^n \theta_i \delta(z_i, y) \\ 
&= \sum_{k = 1}^{\overline{k}}\sum_{\vy\in[\overline{\ell}]^k}\Omega(\vy) \prod_{y \in \vy}^k \vol(y)\;,
\end{align}
as was to be shown. 

\section{Maximum-Likelihood Estimation of $\Omega$} \label{sec:omega_estimate}
We now provide details for the maximum-likelihood estimate of $\omega$ in \cref{eq:hat_omega_generalized}.
Inserting \eqref{eq:constant} into the definition of $Q$ yields 
\begin{align}
Q(\vz, \Omega, \vtheta) = \sum_{R \in \cR}a_R \log \Omega(\vz_R) - \sum_{k = 1}^{\bar{k}} \sum_{y\in [\bar{\ell}]^k} \Omega(\vy) \prod_{y\in \vy} \vol(y)\;.
\end{align}
Focusing on the set $Y$ on which $\Omega$ takes constant value $\omega$, we write 
\begin{align}
Q(\vz, \Omega, \vtheta) = \sum_{\vy \in Y}\left[\sum_{R \in \cR}\delta(\vz_R, \vy)a_R \log \omega - \omega \prod_{y\in \vy} \vol(y)\right] + C\;,
\end{align}
where $C$ captures terms that do not depend on $\omega$. 
The first-order condition reads
\begin{align}
0 = \frac{\partial Q}{\partial \omega} = \sum_{\vy \in Y}\left[ \frac{1}{\omega}\sum_{R \in \cR} a_R \delta(\vz_R, \vy) - \prod_{y \in \vy} \vol(y)\right]\;.
\end{align}
Solving for the constant $\omega$ yields \eqref{eq:hat_omega_generalized}.

\section{Derivation of \eqref{eq:combinatorial}}\label{sec:mod_cut_vol_deriv}
We have 
\begin{align}
Q(\vz, \Omega, \vtheta) &= \sum_{R\in \cR}\left[a_R\log \Omega(\vz_R) -b_R\pi(\vtheta_R)\Omega(\vz_R)  \right] \\ 
&= \sum_{R\in \cR}\sum_{\vp \in \cP}\delta(\vp, \phi(\vz_R))\left[a_R\log \Omega(\vz_R) -b_R\pi(\vtheta_R)\Omega(\vz_R)  \right] \\ 
&= \sum_{k = 1}^{\bar{k}}\sum_{\vp \in \cP_k} \left[\sum_{R \in \cR_k}a_R\delta(\vp, \phi(\vz_R))\log\Omega(\vp) - \sum_{\vy \in [\bar{\ell}]^k}\delta(\vp, \phi(\vy))\Omega(\vp)\prod_{y\in \vy} \vol(y)\right] \\ 
&= \sum_{k = 1}^{\bar{k}}\sum_{\vp \in \cP_k} \left[\cut_\vp(\vz)\log\Omega(\vp) - \vol_\vp(\vz)\Omega(\vp)\right] \\ 
&= \sum_{\vp \in \cP} \left[\cut_\vp(\vz)\log\Omega(\vp) - \vol_\vp(\vz)\Omega(\vp)\right]\;, 
\end{align}
where to obtain the third line we have applied the identity \eqref{eq:constant}.

\section{Fast evaluation of volume terms}
\label{sec:eval_sums}
Define
\begin{align}
U_\vp \equiv \mathrm{\vol}_\vp(\vz)\binom{k}{\vp}^{-1}\prod_{j = 1}^k \abs{\{h:p_h = j\}}! \;.
\end{align}
The number $U_\vp$ is an order-corrected version of the moments $\vol_\vp(\vz)$ in which there is a single term for each distinct partition vector $\vp$. 
Let $\ve_j$ be the $j$th standard basis vector. 
For each $k$, let $\mu_k = \sum_{\ell = 1}^{\bar{\ell}} \mathit{\vol}(\ell)^k$; these moments can be computed in $O(nk)$ time.
\begin{prop} \label{prop:recursion}
	Fix $\vp$, and let $r = \norm{\vp}_0$ be the number of nonzero elements of $\vp$. 
	Then,  
	\begin{align}
	U_\vp = \mu_{p_r} U_{\vp - p_r \ve_r} - \sum_{j = 1}^{r - 1} U_{\vp + p_r(\ve_j - \ve_r)}\;. 
	\end{align}
\end{prop}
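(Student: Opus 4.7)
My plan is to prove the recursion via a clean combinatorial reformulation of $U_\vp$. First, I would establish that for any partition vector $\vp$ with $r = \norm{\vp}_0$ nonzero entries,
\begin{align*}
U_\vp = \sum_{\substack{(\ell_1, \ldots, \ell_r) \in [\bar{\ell}]^r \\ \text{all distinct}}} \prod_{h=1}^r \vol(\ell_h)^{p_h}.
\end{align*}
To prove this, I would start from $\vol_\vp(\vz) = \sum_{\vy\in[\bar\ell]^k}\delta(\vp,\phi(\vy))\prod_{y\in\vy}\vol(y)$ and group the ordered tuples $\vy$ with $\phi(\vy)=\vp$ by their underlying set of distinct labels and by the assignment of those labels to the multiplicities $p_1,\ldots,p_r$. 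For each choice of an ordered tuple of $r$ distinct labels $(\ell_1,\ldots,\ell_r)$ where $\ell_h$ receives multiplicity $p_h$, the number of orderings of $\vy$ contributing the same product $\prod_h \vol(\ell_h)^{p_h}$ is $\binom{k}{\vp}$; and the number of distinct such ordered tuples that yield the same multiset of labels is $r!/\prod_j m_j!$, where $m_j = |\{h : p_h = j\}|$. The normalization $\binom{k}{\vp}^{-1}\prod_j m_j!$ in the definition of $U_\vp$ precisely cancels both combinatorial factors, leaving the formula above.

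Given this clean expression, the recursion is a direct splitting argument. Writing $\mu_{p_r} = \sum_{\ell_r \in [\bar\ell]} \vol(\ell_r)^{p_r}$, I would expand
\begin{align*}
\mu_{p_r} U_{\vp - p_r \ve_r} = \sum_{\substack{(\ell_1,\ldots,\ell_{r-1}) \\ \text{distinct}}} \sum_{\ell_r \in [\bar\ell]} \prod_{h=1}^{r-1} \vol(\ell_h)^{p_h} \cdot \vol(\ell_r)^{p_r},
\end{align*}
then split the inner sum according to whether $\ell_r \notin \{\ell_1,\ldots,\ell_{r-1}\}$ or $\ell_r = \ell_j$ for some $j < r$. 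The first case reassembles exactly as $U_\vp$ by Step~1. The second case, summed over $j$, contributes for each $j$ a sum over distinct $(\ell_1,\ldots,\ell_{r-1})$ of $\vol(\ell_j)^{p_j + p_r}\prod_{h \neq j, h < r}\vol(\ell_h)^{p_h}$, which I recognize as $U_{\vp + p_r(\ve_j - \ve_r)}$ via the formula of Step~1, using that this formula depends only on the multiset of nonzero entries of its index and is therefore insensitive to the non-sortedness of $\vp + p_r(\ve_j - \ve_r)$. Rearranging yields the proposition.

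The main obstacle is the bookkeeping in Step~1, particularly when $\vp$ has repeated parts: overcounting among labels of equal multiplicity forces the $r!/\prod_j m_j!$ correction, which must mesh precisely with the multinomial $\binom{k}{\vp}$ counting orderings. Sanity checks on small cases such as $\vp = (1,1)$ (giving $U_\vp = \mu_1^2 - \mu_2$) and $\vp = (2,2,1)$ confirm the formula. Once Step~1 is established, Step~2 is an elementary splitting computation and the recursion follows.
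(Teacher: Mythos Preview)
Your proposal is correct and follows essentially the same route as the paper: both arguments first reduce $U_\vp$ to the clean form $U_\vp = \sum_{t_1 \neq \cdots \neq t_r} \prod_{j=1}^r \vol(t_j)^{p_j}$ and then peel off the $r$th factor, splitting the resulting sum over $t_r$ into the case $t_r \notin \{t_1,\ldots,t_{r-1}\}$ (giving $U_\vp$) and the cases $t_r = t_j$ (giving the correction terms). Your explicit remark that $U_{\vp + p_r(\ve_j - \ve_r)}$ is well-defined because the formula depends only on the multiset of nonzero entries is a point the paper leaves implicit; the only place to be careful is the overcounting factor in Step~1, where the number of ordered tuples yielding the same labeled multiset is $\prod_j m_j!$ (not $r!/\prod_j m_j!$), which is exactly what the normalization supplies.
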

%
\begin{proof}
	For notational compactness, let $v_\ell = \vol(\ell)$. 
	Let $r = \norm{\vp}_0$ be the number of nonzero elements in $\vp$. 
	By definition, we have 
	\begin{align}
	U_\vp = \binom{k}{\vp}^{-1} \prod_{j = 1}^k \abs{\{h \; : \; p_h = j\}}! \sum_{\vy \in [\bar{\ell}]^k}\delta(\vp, \phi(\vy))\prod_{y\in \vy}^k v_{y} \;.
	\end{align}
	For a given label vector $\vy$ such that $\phi(\vy) = \vp$, there are $\binom{k}{\vp} \prod_{j = 1}^k \left(\abs{\{h \; : \; p_h = j\}}!\right)^{-1}$ ways to permute the labels of $\vy$. 
	The number $U_\vp$ thus has exactly one term of the form $\prod_{y\in \vy}^k v_{y}$ for each equivalence class of cluster label vectors under permutations of indices. 
	We thus obtain the simplification 
	\begin{align}
	U_\vp = \sum_{t_1 \neq \cdots \neq t_r} \prod_{j = 1}^r v_{t_j}^{p_j}\;.
	\end{align}
	Peeling off the $r$th term in this product gives 
	\begin{align}
	U_\vp &= \sum_{t_1 \neq \cdots \neq t_{r-1}} \prod_{j = 1}^{r-1} v_{t_j}^{p_j} \sum_{\ell \notin \{t_1,\ldots,t_{r-1}\}}v_{\ell}^{p_r} \nonumber \\ 
	&= \sum_{t_1 \neq \cdots \neq t_{r-1}} \prod_{j = 1}^{r-1} v_{t_j}^{p_j} \left[\sum_{\ell = 1}^{\bar{\ell}}v_{\ell}^{p_r} - \sum_{j' =1}^{r-1}v_{t_j'}^{p_r}\right]\;. \label{eq:ugly}
	\end{align}
	We first recognize 
	\begin{align}
	\sum_{t_1 \neq \cdots \neq t_{r-1}} \prod_{j = 1}^{r-1} v_{t_j}^{p_j} = U_{\vp - p_r \ve_r} \quad \text{and} \quad \sum_{\ell = 1}^{\bar{\ell}} v_\ell^{p_r} = \mu_{p_r}\;.
	\end{align}
	We next observe that 
	\begin{align}
	\sum_{t_1 \neq \cdots \neq t_{r-1}} \prod_{j = 1}^{r-1} v_{t_j}^{p_j} \sum_{j' =1}^{r-1}v_{t_j'}^{p_j'} = \sum_{j = 1}^{r-1} U_{\vp + p_r(\ve_j - \ve_r)}\;.
	\end{align}
	Inserting these identities into \eqref{eq:ugly} completes the proof. 
\end{proof}

We note that \Cref{prop:recursion} also gives an efficient recursive update for the numbers $U_\vp$ when moving nodes between clusters. 
Let $\vz$ and $\vz'$ be distinct label vectors, and let $\{\mu_k\}$ and $\{\mu_k'\}$ be the moments of volumes evaluated with respect to $\vz$ and $\vz'$ respectively. 
Then, applying \Cref{prop:recursion} twice yields the update $\Delta U_\vp  = U'_\vp - U_\vp$: 
\begin{align}
\Delta U_{\vp} = (\Delta \mu_{p_r}) U_{\vp - p_r\ve_r} + \mu_{p_r}(\Delta U_{\vp - p_r\ve_r}) + (\Delta \mu_{p_r})(\Delta U_{\vp - p_r\ve_r}) - \sum_{j = 1}^{r-1} \Delta U_{\vp+p_r(\ve_j - \ve_r)}\;. 
\end{align}
We can thus use \Cref{prop:recursion} to compute the set of numbers $\{U_\vp\}$ once ``from scratch,'' and then efficiently update these numbers as the label vector changes in the course of optimization. 

\section{Derivation of \eqref{eq:aon-simplified}}\label{sec:mod_cut_vol_deriv_aon}
The all-or-nothing affinity function for a size-$k$ hyperedge $R$ can be written as
\begin{align}
\label{ppomega}
\Omega(\vz_R) = \omega_{k0} + \delta(\vz_R)(\omega_{k1}- \omega_{k0}) = \omega_{k0} + \delta(\vz_R) \gamma_k \beta_k,
\end{align}
and we also have
\begin{align}
\label{logppomega}
\log \Omega(\vz_R) = \log \omega_{k0} + \delta(\vz_R)(\log \omega_{k1}- \log \omega_{k0}) = \log \omega_{k0} + \delta(\vz_R) \beta_k.
\end{align}
In order to later simplify terms in the modularity objective, first recall that $\mathcal{R}_k$ is the set of unordered $k$-tuples of nodes, and $\mathcal{T}_k$ is the set of ordered $k$-tuples. Let $T \subseteq \ell$ indicate that all nodes from an ordered $k$-tuple $T \in \cT_k $ are in cluster $\ell$. 
Recall that $\pi(\vv)$ is the product of entries of a vector $\vv$.
Then we have
\begin{align}
\sum_{R \in \cR_k} b_R \pi(\vd_R) \delta(\vz_R) = \sum_{T \in \cT_k} \pi(\vd_T) \delta(\vz_T) = \sum_{\ell = 1}^{\bar{\ell}} \sum_{T \subseteq \ell} \pi(\vd_T) 
= \sum_{\ell = 1}^{\bar{\ell}} \vol(\ell)^k.
\end{align}
Using the fact that $\vtheta_R = \vd_R$, and applying definition of the cut function $\cut_k(\vz)$, we derive the simplified form of the all-or-nothing modularity objective:
\begin{align}
&Q(\vz, \Omega, \vtheta) = \sum_{R\in \cR}\left[a_R\log \Omega(\vz_R) -b_R\pi(\vtheta_R)\Omega(\vz_R)  \right] \\ 
&= \sum_{k = 1}^{\bar{k}} \sum_{R \in \cR_k} \left[a_R(\log \omega_{k0} + \delta(\vz_R)\beta_k ) - b_R \pi(\vd_R) ( \omega_{k0} + \delta(\vz_R)\gamma_k \beta_k)\right] \\
&=  \sum_{k = 1}^{\bar{k}} \sum_{R \in \cR_k} \left[a_R\log \omega_{k0}  - b_R \pi(\vd_R) \omega_{k0}\right] + \sum_{k = 1}^{\bar{k}} \beta_k \left [\sum_{R \in \cR_k} \left(a_R \delta(\vz_R)  - \gamma_k  b_R \pi(\vd_R) \delta(\vz_R)\right)\right] \\
&=  \sum_{k = 1}^{\bar{k}} \sum_{R \in \cR_k} \left[a_R\log \omega_{k0}  - b_R \pi(\vd_R) \omega_{k0}\right] + \sum_{k = 1}^{\bar{k}} \beta_k \left [m_k - \cut_k(\vz)  - \gamma_k  \sum_{\ell=1}^{\bar{\ell}} \vol(\ell)^k\right] \\
&=J(\vomega) - \sum_{k = 1}^{\bar{k}} \beta_k \left [\cut_k(\vz)  + \gamma_k  \sum_{\ell=1}^{\bar{\ell}} \vol(\ell)^k\right]
\end{align}
where
\begin{align}
J(\vomega) = \left[\sum_{k = 1}^{\bar{k}} \beta_k m_k \right] +\left[ \sum_{k = 1}^{\bar{k}} \sum_{R \in \cR_k} (a_R\log \omega_{k0}  - b_R \pi(\vd_R) \omega_{k0})\right].
\end{align}

 \section{All-Or-Nothing Hypergraph Maximum-Likelihood Louvain} \label{sec:AON_louvain}

In \Cref{sec:sym_HMLL}, we gave pseudocode for Hypergraph Maximum-Likelihood Louvain (HMLL) for general symmetric affinity functions $\Omega$ (\Cref{alg:symHMLLstep,alg:symHMLL}). 
For the special case of the All-Or-Nothing affinity function, it is possible to obtain a considerably faster Louvain algorithm. 
In particular, it is possible to follow the ``supernode'' strategy of classical dyadic Louvain, enabling to compute on data structures much simpler than the original adjacency hyperarray. 
\Cref{alg:aonHMLL} describes the outer loop of AON HMLL. 
The overall structure is much the same as in \Cref{alg:symHMLL}. 
In each iteration of the outer loop, we collapse an initial clustering $\vz$ into a reduced hypergraph $\bar{H}$ with the function $\text{Collapse}(H, \vz)$, in a way that still preserves the special AON structure. We run a Louvain step on the reduced hypergraph, and then use a function $\text{Expand}(H, \vz, \bar{\vz})$ to translate the clustering $\bar{\vz}$ on the reduced hypergraph into a clustering $\vz'$ on the original hypergraph. If $\vz$ and $\vz'$ coincide, the Louvain step found no improvement and we terminate. Otherwise, we begin a new iteration, which begins by collapsing the updated clustering to a reduced hypergraph before running the Louvain step.
In constructing the reduced representation $\bar{H}$, it is necessary to store only the collapsed hypergraph alongside a vector $\bar{\vs}$ of original sizes of each edge. 
The degree $\bar{d}_\ell$ of each collapsed node $\ell$ is simply the sum of degrees of nodes in the corresponding cluster $\ell$. 

The Louvain step itself is given by \Cref{alg:aonHMLLstep}. 
At each stage, we examine a collapsed node $i$ and compute the change in objective associated with changing $\bar{\vz}$ to $\bar{\vz}^{i \mapsto A}$, where 
\begin{align}
\bar{\vz}_j^{i \mapsto A} = 
\begin{cases}
A &\quad i = j \\ 
\bar{\vz}_j &\quad \text{otherwise}\;,
\end{cases}
\end{align}
is the collapsed label vector obtained by setting $z_i \mapsto A$ while leaving other entries unchanged. 
This calculation is contained within the subroutine $\Delta Q_{\mathrm{AON}}$ (\Cref{alg:aonmod}). 
Notably, evaluation of $\Delta Q_{\mathrm{AON}}$ is much faster than the evaluation of $\Delta Q$ in the general symmetric case \Cref{alg:symHMLLstep}.
Local changes in the volume term of~\eqref{eq:aon-simplified} requires only updating sums of powers of cluster volumes, scaled by the resolution parameters $\vbeta$ and $\vgamma$. 
Updates to the cut term of~\eqref{eq:aon-simplified} requires summing over the changes in cut status of each hyperedge adjacent to the node $i$ to be updated. 
In the AON setting, there are only two possible statuses---cut and uncut. 
Checking whether an edge is cut is much easier than computing an exact partition vector for a hyperedge before and after a potential move. 
This step is therefore much faster in practice than computing cut changes for the symmetric objective. 

\begin{algorithm}[t]
	\DontPrintSemicolon
	\SetAlgoLined
	\SetKwRepeat{Do}{do}{while}
	\KwData{Hypergraph $H$, parameter vectors $\vbeta$ and $\vgamma$}
	\KwResult{Updated label vector $\vz$}\;
	$\vz' \gets \vz \gets [n]$ \tcp*{assign each cluster to singleton} 
	\Do{$\vz \neq \vz'$}{
		$\vz \gets \vz'$\\
		$\bar{H}, \bar{\vs} \gets \text{Collapse}(H,\vz)$ \\
		$\bar{\vz}' \gets \text{AONLouvainStep}(\bar{H},\bar{\vs},\vbeta, \vgamma)$\\
		$\vz' \gets \text{Expand}(H, \vz, \bar{\vz}')$
	}
	\Return{$\vz$}
	\caption{AllOrNothingHMLL($H$, $\vbeta$, $\vgamma$)} \label{alg:aonHMLL}
\end{algorithm}

\begin{algorithm}[t]
	\SetAlgoLined
	\DontPrintSemicolon
	\KwData{Collapsed hypergraph $\bar{H} = (\bar{V}, \bar{E})$, edge size vector $\bar{\vs}$, parameter vectors $\vbeta$ and $\vgamma$}
	\KwResult{Label vector $\bar{\vz}$ on $\bar{V}$}\;
	$\bar{\vz} \gets [\bar{n}]$ \tcp*{$\bar{n}$ is the number of nodes in $\bar{H}$}

	$improving \gets \text{true}$\;
	\While{\text{improving}}{
		$improving \gets \text{false}$\;
		\For{$i\in \bar{V}$}
		{
			$\mathcal{E}_i \gets \{e \in \bar{E} \;|\; i \in e\}$ \tcp*{hyperedges incident on node $i$} 
			$\mathcal{A}_i \gets \{\ell \;| \; \exists j \in \mathcal{E}_i : z_j = \ell \}$ \tcp*{clusters adjacent to node $i$}
			
			\tcp{maximum $\Delta$ and maximizer $A$ of the change in $Q$ from moving node $i$ to cluster $A'\in \mathcal{A}_i$}
			
			$(\Delta, A) \gets \argmax_{A'\in \mathcal{A}_i}\Delta Q_{\mathrm{AON}}(\bar{H},\bar{\vz}, \bar{\vs}, A', i,\mathcal{E}_i, \vbeta, \vgamma)$\;
			\If{$\Delta>0$}{
				$\bar{z}_i \gets A$, $improving \gets \text{true}$
			}
		}	
	}
	\caption{AONLouvainStep($\bar{H}$, $\bar{\vs}$, $\vbeta$, $\vgamma_\ell$)} \label{alg:aonHMLLstep}
\end{algorithm}

\begin{algorithm}[t]
	\SetAlgoLined
	\DontPrintSemicolon
	\KwData{Collapsed hypergraph $\bar{H} = (\bar{V}, \bar{E})$, current clustering $\bar{\vz}$, edge size vector $\bar{\vs}$, proposed new cluster $A$, node to move $i$, edge set incident to $i$ $\mathcal{E}_i$, parameter vectors $\vbeta$ and $\vgamma$}
	\KwResult{Change in modularity associated with moving node $i$ to new cluster $A$}\;
	$v_A \gets \vol(A)$, $v_i \gets \vol(\bar{\vz}[i])$\\
	$\Delta v = \sum_{k = 1}^{\bar{k}}\beta_k\gamma_k \left[v_i^k - (v_i-\bar{d}_i)^k + v_A^k- (v_A+\bar{d}_i)^k\right]$\; 
	
	$\Delta c = \sum_{e \in \mathcal{E}_i} \beta_{\bar{s}_i} \left[\delta\left(\bar{\vz}_e^{i \mapsto A}\right) - \delta\left(\bar{\vz}_e\right)\right]$\;
	\textbf{return} $\Delta c +\Delta v$
	\caption{$\Delta Q_{\mathrm{AON}}(\bar{H},\;\bar{\vz}, \;\bar{\vs},\; A,\; i,\;\mathcal{E}_i, \;\vbeta, \;\vgamma)$} \label{alg:aonmod}
\end{algorithm}

\end{document}